\definecolor{blueprl}{RGB}{46,48,146}
\newtheorem{proposition}{Proposition}
\DeclareMathOperator{\sign}{sign}
\DeclareMathOperator{\sinc}{sinc}
\DeclareMathOperator{\arccosh}{arccosh}
\DeclareMathOperator{\rect}{rect}
\newcommand{\cam}[1]{\textcolor{violet}{ #1}}
\newcommand{\comn}[1]{\textbf{\textcolor{blue}{AM:#1~}}}
\date{\today}
\begin{document}

\title{Realizing an Unruh-DeWitt detector through electro-optic sampling of the electromagnetic vacuum}

\author{Sho Onoe}
\email{sho.onoe@uqconnect.edu.au}
\affiliation{Centre for Quantum Computation and Communication Technology, School of Mathematics and Physics, The University of Queensland, St. Lucia, Queensland, 4072, Australia}
\author{Thiago L. M. Guedes}
\affiliation{Department of Physics, University of Konstanz, D-78457 Konstanz, Germany}
\author{Andrey S. Moskalenko}
\email{moskalenko@kaist.ac.kr}
\affiliation{Department of Physics, KAIST, Daejeon 34141, Republic of Korea}
\author{Alfred Leitenstorfer}
\affiliation{Department of Physics, University of Konstanz, D-78457 Konstanz, Germany}
\author{Guido Burkard}
\affiliation{Department of Physics, University of Konstanz, D-78457 Konstanz, Germany}
\author{Timothy C. Ralph}
\affiliation{Centre for Quantum Computation and Communication Technology, School of Mathematics and Physics, The University of Queensland, St. Lucia, Queensland, 4072, Australia}

\date{\today}

\begin{abstract}
A new theoretical framework to describe the experimental advances in electro-optic detection of broadband quantum states, specifically the quantum vacuum, is devised. By making use of fundamental concepts from quantum field theory on spacetime metrics, the nonlinear interaction behind the electro-optic effect can be reformulated in terms of an Unruh-DeWitt detector coupled to a conjugate field during a very short time interval. When the coupling lasts for a time interval comparable to the oscillation periods of the detected field mode (i.e. the subcycle regime), virtual particles inhabiting the field vacuum are transferred to the detector in the form of real excitations. We demonstrate that this behavior can be rigorously translated to the scenario of electro-optic sampling of the quantum vacuum, in which the (spectrally filtered) probe works as an Unruh-DeWitt detector, with its interaction-generated photons arising from virtual particles inhabiting the electromagnetic vacuum. We discuss the specific working regime of such processes, and the consequences through characterization of the quantum light involved in the detection.
\end{abstract}

\maketitle

\section{Introduction}
Quantum field theory is one of the foundations of modern physics, answering many questions from the early days of quantum physics. Although it was initially seen as an approach restricted to particle and high-energy physics, it gradually became a fundamental working tool for branches of physics spanning from condensed matter and quantum optics to quantum-relativistic effects \cite{QFThistory}. The latter case involved the consideration of quantum fields on relativistic spacetime metrics, leading to novel implications and a (re)formulation of thermodynamical laws for systems in which both relativistic and quantum effects play major roles \cite{Bekenstein, Hawking_thermodynamics}. These studies led to the proposition of a range of extremely interesting effects, of which Unruh-Davies and Hawking radiation can be highlighted as some of the most intriguing. In the last two decades techniques from quantum information science \cite{QIhistory} have been merged in various ways with quantum field theory in curved spacetime, leading to investigations of information-related questions at the interface of quantum mechanics and relativity. Potential implications range from creating novel measurement protocols to answering fundamental questions about the quantization of gravity (the only fundamental force yet to be quantized) \cite{entanglementinLQG, Maldacena}. This new field is often referred to as relativistic quantum information.

The Unruh-Davies effect corresponds to the observation of thermal radiation in the quantum vacuum by an observer moving with constant proper acceleration in (Minkowski) spacetime \cite{Unruh, Davies}. It was initially proposed as an alternative observation possibility of the effect by which an inertial observer sees thermal radiation coming from the horizon of a black hole, the Hawking effect \cite{Hawkingeffect}. The mechanism behind both effects, which are linked by the equivalence principle, relies on the existence of horizons in spacetime \cite{Birrel}. Initial doubts about the observability of such radiation were allayed by the introduction of the Unruh-DeWitt (UDW) detector \cite{Unruh1984What}. This tool is the theoretical representation of a (usually) point-like monopole that couples to the quantum field of interest through a Hamiltonian (or Lagrangian) term linear in both monopole and field, with coupling strength and observer's worldline that can be varied at will {\cite{Schlicht_2004}}. The importance of this innovation was that the somewhat abstract particles seen in the relativistic field representations could really be mapped to the excitations of a simple, but convincing detector model.

One of the most interesting predictions based on such devices is the possibility of getting two space-like separated UDW detectors entangled through their coupling to the field even if the latter is in its ground (vacuum) state, i.e. the entanglement between the detectors appears before light-like particles can travel the distance between the detectors \cite{Valentini, Reznik}. This discovery was given the name vacuum entanglement, since it is believed that the vacuum works as a reservoir of entanglement for quantum systems. Although similar interaction Hamiltonians can be found in several branches of physics, one of the most characteristic examples of a real-world realization of such interactions is the light-matter coupling.

Two-level atomic or artificial-atom systems 
are potential physical embodiments of the theoretical UDW detector \cite{Unruh_Scully}. Some ground-breaking experiments have managed to make considerable advances in this regard, by either effectively controlling the (slowly-varying) time-dependent coupling between the system and a resonator \cite{Cleland} or another artificial atom \cite{Schoelkopf}, or even by rapidly switching on such coupling through subcycle activation of electronic quantum wells in an optical cavity \cite{Alfred_wells}. Theory based on the latter experiment predicts emission of virtual cavity photons as a consequence of the nonadiabatic change of the ground state of the system, a feature that closely resembles the emission of Unruh-Davies particles \cite{Bastard, Auer, Yablonovitch}.

Related studies employing a subcycle-light probe to electro-optically sample low-frequency (quantum) electric fields have recently led to the measurement of the (broadband) electric-field variance of the vacuum \cite{Alfred_vacuum, Faist}. These results have raised a debate on whether the measurements deliver real estimations for vacuum fluctuations or their outcomes represent a by-product of squeezing \cite{Faist}. We remark that theoretical support to the former interpretation has already been provided \cite{Andrey, Buhmann}. In fact, the spectra obtained from such measurements should closely resemble the spectra of Unruh-like particles detected by a finite-lifetime observer moving with constant proper acceleration \cite{Thiago, RovelliandMartinetti}. Moreover, these works provide a possible generalization of the known relation between two-mode squeezing and Unruh-Davies radiation \cite{Leonhardt_book, Birrel}, and reinforce the idea of electro-optic sampling being an analogue of a vacuum measurement in which the observer is found in a non-inertial reference frame \cite{Matthias}. 

Nevertheless, the connection between the quantum optical process of squeezing and these types of relativistic effects is subtle. As an example, consider the dynamical Casimir effect in which a rapidly oscillating mirror can produce photons from the vacuum \cite{Wilson2011}. As it is the physical acceleration of the mirror that produces the photons, there is a clear connection between these photons and the Unruh-Davies radiation. On the other hand, the oscillating mirror affects the vacuum in a very similar way to an oscillating boundary condition produced by a harmonic-optical-pump-induced rapidly varying refractive index in a nonlinear crystal. The latter is a well-known optical method of producing photons from the vacuum via squeezing. Although the physical mechanisms are distinct, the overall effect is to produce a very similar interaction with the vacuum.

The present article aims at strengthening the connections between effects predicted in quantum field theory in curved spacetime \cite{Birrel} and the recent measurements of vacuum fluctuations based on the electro-optic effect. To this end, we make use of theoretical tools pertaining to both fields and propose a way to tie these perspectives together.
The thereby developed approach allows for rigorous differentiation between different regimes in electro-optic sampling, a measurement scheme in which the probe pulse is subcycle with respect to the sampled-frequency field. In particular, we identify the regime in which the ultrafast switching on and off of the interaction, controlled by a strong coherent probe field, directly maps virtual particles from the vacuum into real excitations of the probe field. We achieve this identification by firstly studying the behavior of a simple harmonic UDW detector coupled to a subcycle field mode (cf. Fig. \ref{FigVisualUnruh}). We then place this alongside a novel analysis of the quantum electro-optic sampling and thereby find the regime in which the actions of the (UDW-)detector-field and the (electro-optic) field-field interactions are approximately equivalent. In fact, this equivalence defines a new and experimentally feasible optical variant of the UDW detector \cite{Unruh_Gooding}. We further discuss the presence of thermality in the predicted vacuum quadrature moments. This thermality arises as a consequence of entanglement breakage, thus providing us with direction towards the possible observation and harnessing of entanglement from the vacuum through subcycle electro-optic sampling techniques.

The remainder of this paper is organized as follows: after discussing field modes and particularly subcycle modes in the next two sections, we proceed to our analysis of subcycle sampling using a UDW in section IV. After a brief overview of the techniques involved in connecting the UDW detection to electro-optic sampling in section V, we move in section VI to making this connection rigorous. Our conclusions and outlook are then presented in the final section.

\begin{figure}[t]
\includegraphics[width=0.45\textwidth]{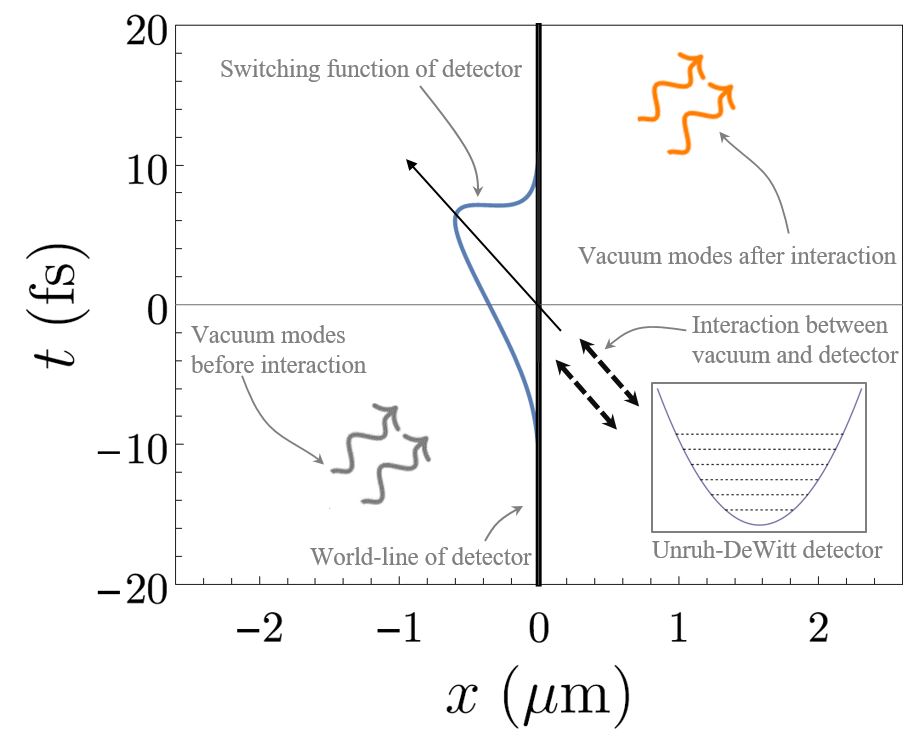}
\caption{A schematic representation of the Unruh-DeWitt (UDW) detector. The inset at the bottom right shows the harmonic oscillator of the {UDW} detector, with dashed lines representing the different energy levels with constant gap $\hbar \omega_u$. The vacuum modes of the field are represented by the gray curly lines in the bottom-left corner. The UDW detector follows the trajectory shown via the bold black line. It interacts with the vacuum during a very short time interval {defined by} the switching function (represented by the blue line). In the Schrödinger picture, both the UDW detector and the vacuum states are affected by this interaction. The orange curvy lines represent the evolved modes of the field that are no longer in the initial vacuum state.
}
\label{FigVisualUnruh}
\end{figure}

\section{Preliminaries}
\subsection{Scalar and conjugate fields}
We start our discussion by introducing the basic tools of this work. In the second quantization formalism, a classical field and its conjugate are promoted to field operators by translating the Poisson bracket between them into a commutation relation. For the specific case of scalar bosonic fields, both the field and its conjugate are derived from the Klein-Gordon Lagrangian density and are given as solutions of the Klein-Gordon equation. These solutions, when decomposed into single-frequency modes, allow for a description of the field operators in terms of an infinite series of quantum harmonic oscillators \cite{Peskin}. When considering fields that propagate exclusively (or predominantly) along a given direction, a decomposition of these operators in terms of travelling-wave modes turns out to be particularly useful \cite{Blow1990,Vogel}, since the fields so described behave essentially as fields in (1+1) dimensions (i.e. fields depending on one space and one time coordinate), while keeping some (3+1)-dimensional features, namely the cross-sectional area. Therefore, their action is given by a fourfold integral over the Lagragian density. In terms of these modes, the (right-moving) field operator is given as
\begin{equation}
\begin{gathered}
\hat{\Phi}(t,x) =
\int_{-\infty}^{\infty}\hspace{-3.5 mm} \mathrm{d} \omega \, \Phi_{\omega}(t,x) \hat{a}_{\omega} \, ,
\\
\Phi_{\omega}(t,x) =\sqrt{\frac{\hbar c}{4 \pi n_{\omega} |\omega| A}}e^{-i \omega(t- \frac{n_\omega x}{c})}
 \, ,
\end{gathered}
\end{equation}
where we have neglected the left-moving modes for simplicity. This simplification is justified because we only consider {interactions} with the right-moving modes. Here and in the following expressions, $A$ is a normalization parameter with units of area and accounts for the transversal extension of the field. The parameter $n_{\omega}=n_{|\omega|}$ is included to allow propagation velocities other than the speed of light, i.e. $v_\omega=c/n_{\omega}$ ($n_{\omega}$ will assume the role of refractive index once we connect these concepts to quantum-optical systems). For free fields, $n_{\omega}=1$. $\hbar$ is the reduced Planck constant.

The corresponding conjugate field in terms of the travelling-wave mode decomposition is given by: 
\begin{equation}
\begin{gathered}
\hat{\Pi}(t,x)=
\int_{-\infty}^{\infty}\hspace{-3.5 mm} \mathrm{d} \omega \, \Pi_{\omega}(t,x) \hat{a}_{\omega} \, ,
\\
\Pi_{\omega}(t,x) = -\sign(\omega)i\sqrt{\frac{\hbar|\omega|}{4 \pi n_{\omega}{c} A}}e^{-i \omega(t- \frac{n_\omega x}{c})}
 \, .
\end{gathered}\label{EqnConjufateField}
\end{equation}
In this article we have adopted the convention $\hat{a}_{{-\omega}}=\hat{a}_{{\omega}}^{\dag}$. This justifies the use of the terminology positive-frequency modes for annihilation operators and negative-frequency modes for creation operators. The corresponding commutation relation is then $[\hat{a}_{\omega},\hat{a}_{\omega'}^{\dag}]=\; \delta(\omega-\omega')\sign(\omega)$, consistent with the relation $[\hat{\Phi}(t,x),\hat{\Pi}(t, x')]\approx i(\hbar c/n^2_\omega A)\delta(x-x')$ dictated by the correspondence principle.

\subsection{Discrete decomposition of scalar and conjugate fields}
Following the description of \cite{Leaver1986SpecDecomp, Law2000Continuous, Rohde_2007, Onoe2019}, we introduce a complete orthonormal set of discrete (nonmonochromatic) bosonic operators $\{\hat{a}_{i},\hat{a}_{j},...\}$:
\begin{equation}
\hat{a}_{i} = \int_{-\infty}^{\infty}\hspace{-3.5 mm}\mathrm{d}\omega \, f_i(\omega) \hat{a}_{\omega}\, .
\label{nonmonochromatic}
\end{equation}
This set of operators satisfies the commutation relations $[\hat{a}_{i},\hat{a}_{j}^{\dag}]=\delta_{ij}$ and $[\hat{a}_{i},\hat{a}_{j}]=0$.
The scalar and conjugate-field operators can be expanded in terms of the operators in this discrete basis set with the aid of the decomposition \cite{Onoe2019}
\begin{equation}
\hat{a}_\omega=\sum_i([\hat{a}_{\omega},\hat{a}_i^{\dag}]\hat{a}_i+[\hat{a}_i,\hat{a}_{\omega}]\hat{a}_i^{\dag})\label{decompositionforladder} \, ,
\end{equation}
allowing these fields to be cast in the forms
\begin{gather}
\hat{\Phi}(t,x)= \sum_{i}\Phi_{i}(t,x)\hat{a}_{i}+h.c. \, , \label{phi_discrete}
\\
\hat{\Pi}(t,x)= \sum_{i}\Pi_{i}(t,x)\hat{a}_{i}+h.c. \, , \label{pi_discrete}
\end{gather}
respectively. $\Phi_{i}(t,x)$ ($\Pi_{i}(t,x)$) can be interpreted as the scalar (conjugate) field mode that is annihilated by the operator $\hat{a}_{i}$. We introduce $\Phi_{i}(\omega)$ and $\Pi_{i}(\omega)$ as the Fourier transforms of these discrete modes, 
\begin{equation}
\begin{gathered}
\Phi_{i}(t,x) = \int_{-\infty}^{\infty}\hspace{-3.5 mm}\mathrm{d}\omega'\, e^{-i \omega'(t- \frac{n_\omega x}{c})} \Phi_i(\omega') \, ,
\\
\Pi_{i}(t,x) = \int_{-\infty}^{\infty}\hspace{-3.5 mm}\mathrm{d}\omega'\, e^{-i \omega'(t- \frac{n_\omega x}{c})} \Pi_i(\omega') \, .
\end{gathered}
\end{equation}
For a desired $\Phi_i$ mode profile, the spectral-decomposition coefficients  for $\hat{a}_{i}$ in Eq.~\eqref{nonmonochromatic} can be written with respect to the Fourier transform as follows:
\begin{equation}
\begin{aligned}
f_i(\omega)&=\sign(\omega) \sqrt{\frac{4 \pi n_\omega |\omega|A}{\hbar c}}\Phi^* _{i}(\omega)
\\
&=i\sqrt{\frac{4 \pi n_\omega c A}{\hbar |\omega|}}\Pi^*_{i}(\omega)   \, .
\label{f_of_ai}
\end{aligned}
\end{equation}

\section{Subcycle modes}\label{SecSubcycleModes}
In this section we shall consider for simplicity a free Klein-Gordon scalar field ($n_{\omega}=1$) in its vacuum state (i.e. $\hat{a}_{\omega}\ket{0}=0, \; \forall \; \omega > 0$). With such assumptions in mind, we briefly describe the properties of nonmonochromatic modes, as given by Eqs. \eqref{nonmonochromatic}-\eqref{f_of_ai}, when one of these modes is tailored to match a Gaussian pulse-like profile with very short time extension (i.e. the subcycle regime). A mode is considered to enter the subcycle regime when the envelope decays at a time interval shorter than its own inverse central frequency.
\subsection{Gaussian profile}
We introduce a normalized Gaussian-profile mode of the form
\begin{gather}
\Phi_g(t,x)=\frac{1}{(2\pi)^{1/4}}\sqrt{\frac{\hbar \sigma c}{A \omega_0}}e^{-\sigma^2(t-\frac{x}{c}-t_0)^2-i (t-\frac{x}{c}) \omega_0 }\, ,
\label{Gaussian}
\end{gather}
where we have chosen a particular mode-index $i=g$ from the infinitely many possible values of $i$. This Gaussian pulse has a central (or carrier) frequency $\omega_0/2\pi$ and a temporal variance of {$1/(2\sigma^2)$, with the position of its amplitude maximum crossing the point $x=0$ at time} $t=t_0$. {We consider the Gaussian-profile} mode to be in the subcycle regime when $2\pi/\omega_0>\sqrt{8}/\sigma$, i.e. one period of the carrier frequency is at least as long as four standard deviations of the Gaussian envelope.

The operator that annihilates the Gaussian mode, $\Phi_g(t,x)$, is given by:
\begin{gather}
\hat{a}_g = \int_{-\infty}^{\infty}\hspace{-3.5 mm} \mathrm{d}\omega \,  f_g(\omega)\hat{a}_{\omega} \, ,  \label{EqnDefoffPOmega_a}
\\
f_g(\omega) = \frac{1}{(2\pi)^{1/4}} \sign(\omega) \sqrt{\frac{|\omega|}{ \omega_0 \sigma}}e^{-i t_0(\omega-\omega_0)-\frac{(\omega-\omega_0)^2}{4 \sigma^2}} \, 
\label{EqnDefoffPOmega}
\end{gather}
(without the loss of generality, we shall consider $t_0=0$ throughout the calculations for simplicity).
This operator is a normalized bosonic operator satisfying the commutation relation $[\hat{a}_g,\hat{a}_g^{\dag}]=1$. It can be further decomposed into positive- and negative-frequency components, according to:
\begin{gather}
\hat{a}_g = \cosh(\theta_{g})\hat{a}_g^{(+)}+\sinh(\theta_{g})\hat{a}_g^{(-)}{}^{\dag} \, .
\label{+and-decomposition}
\end{gather}
Each of these terms is defined as follows:
\begin{gather}
\hat{a}_g^{(+)}= \frac{1}{\cosh(\theta_{g})}\int_{0}^{\infty}\hspace{-3.5 mm}\mathrm{d}\omega \, f_g(\omega)\hat{a}_{\omega} \label{ladder+}\, ,
\\
\hat{a}_g^{(-)}= \frac{1}{\sinh(\theta_{g})}\int_{0}^{\infty}\hspace{-3.5 mm} \mathrm{d}\omega\, f^*_g(-\omega)\hat{a}_{\omega} \, ,
\\
\theta_{g} = \arccosh\left(\int_0^{\infty}\hspace{-3.5 mm} \mathrm{d}\omega\, |f_g(\omega)|^2 \right) \, . \label{theta_g}
\end{gather}
The operators $\hat{a}_g^{(\pm)}$ are also normalized bosonic operators, i.e. $[\hat{a}_g^{(\pm)},\hat{a}_g^{(\pm)}{}^{\dag}]=1$. We note that in general $\theta_{g}$ is very small and therefore the mode is dominated by its positive frequency component (i.e. $\hat{a}_g \approx \hat{a}_g^{(+)}$). The negative frequency component, $\hat{a}_g^{(-)}$, becomes significant in cases such as the subcycle regime. We highlight that $\hat{a}_g^{(\pm)}$ are generally not orthogonal to each other, i.e.
\begin{equation}
{} [\hat{a}_g^{(+)},\hat{a}_g^{(-)}{}^{\dag}]=\frac{1}{\cosh(\theta_g)\sinh(\theta_g)}\int_{0}^{\infty}\hspace{-3.5 mm}\mathrm{d}\omega \, f_g(\omega)f_g(-\omega) \, .
\label{nonorthogonality}
\end{equation}
In view of Eq. \eqref{nonorthogonality}, a completely orthogonal decomposition of Eq. \eqref{+and-decomposition} might be preferred; this is achieved by further decomposing $\hat{a}_g^{(-)}$ into an orthogonal and a parallel component with respect to $\hat{a}_g^{(+)}$ \cite{Rohde_2007, Onoe2019}, i.e.
\begin{equation}
\hat{a}_{g}^{(-)}= \cos(\theta_{g,\perp}) \hat{a}_{g,\perp}^{(-)}+\sin(\theta_{g,\perp})e^{-i \phi_{g,\perp}} \hat{a}_{g}^{(+)} \, , \label{orthogonalize}
\end{equation}
where we have defined the following components:
\begin{gather}
\hat{a}_{g,\perp}^{(-)}=
\frac{\hat{a}_{g}^{(-)}-\big[\hat{a}_g^{(-)},\hat{a}_g^{(+)}{}^{\dag}\big] \hat{a}_g^{(+)}}{\sqrt{1-\Big|\big[\hat{a}_g^{(-)},\hat{a}_g^{(+)}{}^{\dag}\big]\Big|^2}} \, ,
\\
\theta_{g,\perp}= \arcsin \left(\Big|\big[\hat{a}_g^{(+)},\hat{a}_g^{(-)}{}^{\dag}\big]\Big|\right) \, , \label{EqnDefofthetagperp}
\\
\phi_{g,\perp}=\text{Arg}\left([\hat{a}_g^{(+)},\hat{a}_g^{(-)}{}^{\dag}]\right) \, .
\end{gather}
Utilizing these results, Eq. \eqref{+and-decomposition} can be recast in the following manner:
\begin{equation}
\begin{aligned}
\hat{a}_g=\cosh(\theta_g)\hat{a}_g^{(+)}&+\sinh(\theta_g)\sin(\theta_{g,\perp})e^{i \phi_{g,\perp}} \hat{a}_g^{(+)}{}^{\dag}
\\
&+\sinh(\theta_{g})\cos(\theta_{g,\perp}) \hat{a}_{g,\perp}^{(-)}{}^{\dag} \, .
\end{aligned}\label{EqnSpectralDecompositionSimplified}
\end{equation}
If $\theta_{g,\perp}=0$, then $\hat{a}_g$ is in a two-mode squeezed state between $\hat{a}_g^{(+)}$ and $\hat{a}_{g,\perp}^{(-)}$. If $\theta_{g,\perp}=\pi/2$, then $\hat{a}_g$ is in a single-mode squeezed state of $\hat{a}_g^{(+)}$. When $0 <\theta_{g,\perp} < \pi/2$ the state has a combination of {single- and two-mode} squeezing terms.

\subsection{Properties of a subcycle mode}
\label{SecSubcycleProp}
In this section, we analyse some phase-space properties of the subcycle mode described by $\hat{a}_g$ \cite[Chap. 4]{serafini2017quantum}.
The vacuum-state projection on mode $g$, $\hat{\rho}_g (\hat{a}_g,\hat{a}^\dagger_g) =\text{tr}_{\perp g}\{|0\rangle \langle 0|\}$ (where $\text{tr}_{{\perp g}}$ stands for trace over the complementary subspace, i.e. all modes orthogonal to $g$)
 is described by a Gaussian quasiprobability distribution in phase space that can be fully characterized by the first and second moments of the quadrature operator $\hat{X}_{g}(\phi)= \hat{a}_g e^{-i \phi}+\hat{a}_g^{\dag}e^{i \phi}$ \cite{weedbrook2012gaussian, adesso2014continuous}.
Its first moment is $\braket{\hat{X}_{g}(\phi)}=\text{tr}\{\ket{0}\bra{0} \hat{X}_g(\phi)\}=\text{tr}_g\{\hat{\rho}_g \hat{X}_{g}(\phi)\}=0$, while the second moment {reads}:
\begin{equation}
\braket{\hat{X}_{g}^2(\phi)}=1+2\braket{\hat{a}_g^{\dag}\hat{a}_g}+2\Re[{\braket{\hat{a}_g^2}}e^{- 2 i \phi}] \, ,\label{EqnSubcycleVariance1}
\end{equation}
\begin{equation}
\braket{\hat{a}_g^{\dag}\hat{a}_g}=\sinh^2(\theta_{g}) \, , \label{EqnSubcycleVariance2}
\end{equation}
\begin{equation}
\braket{\hat{a}_g^2}=\cosh(\theta_{g})\sinh(\theta_{g})\sin(\theta_{g,\perp}) \exp[i \phi_{g,\perp}] \, .
\label{EqnSubcycleVariance3}
\end{equation}

\begin{figure}[t]
\includegraphics[width=0.45\textwidth]{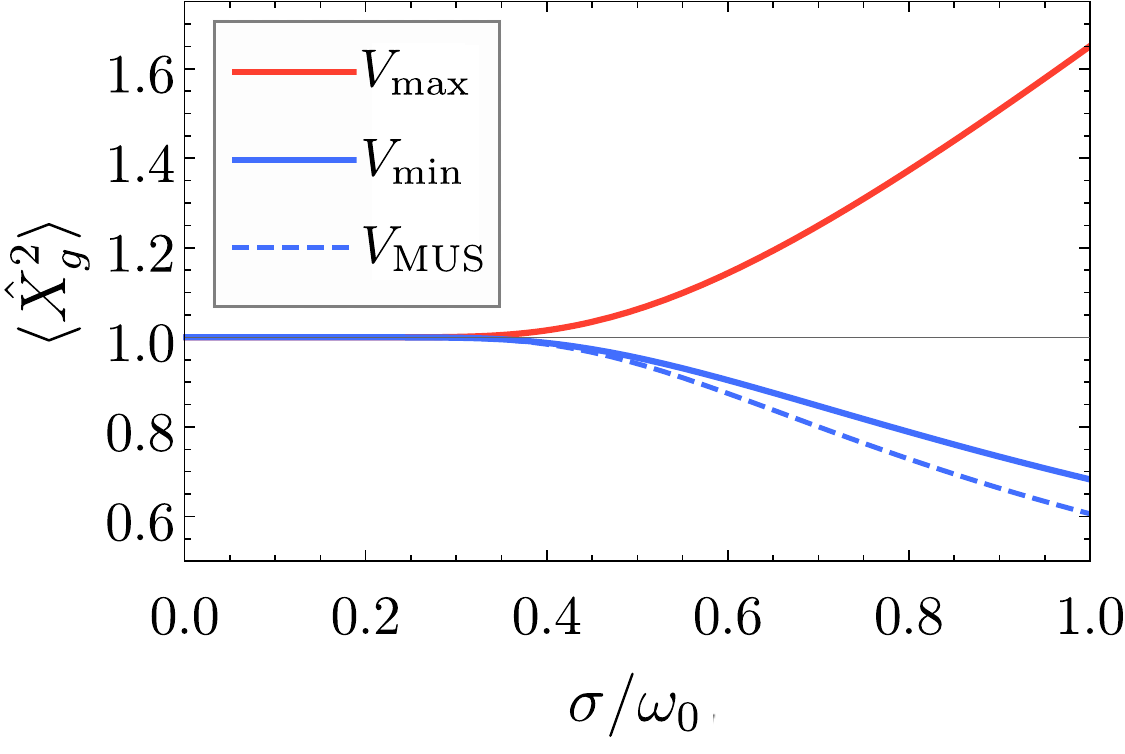}
\caption{Numerical plot of the $Q$- and $P$-quadrature variances for the subcycle mode $g$, Eq. \eqref{EqnDefoffPOmega_a}, in dependence of the normalized inverse time extension of the Gaussian profile, $\sigma/\omega_0$. The top solid line (red) represents the {$P$-quadrature} variance, while the bottom solid line (blue) corresponds to the {$Q$-quadrature} variance. The dashed line represents the {$Q$-quadrature} variance one would obtain by assuming that both variances would characterize a single-mode (minimum-uncertainty) squeezed state (MUS), i.e. {$\langle \hat{Q}^2_g(\sigma)\rangle_{\text{MUS}}  = 1/\langle \hat{P}^2_g (\sigma)\rangle$}. The difference between the dashed and the bottom solid line can be attributed to extra thermal photons within $\hat{a}_g$
. 
It can be seen from the graph that the {$Q$- and $P$-quadrature} variances start to deviate from quantum shot noise (i.e. variance of $1$) at around $\sigma \approx 0.45 \omega_0$, when the Gaussian-profile mode enters the subcycle regime.}
\label{FigPlotSubcycle}
\end{figure}
As can be seen from Eq. \eqref{EqnSubcycleVariance2}, the parameter $\theta_{g}$, as defined by Eq. \eqref{theta_g}, characterizes the amount of particles that can be found in the mode $g$, $\braket{\hat{N}_g}=\braket{\hat{a}_g^{\dag}\hat{a}_g}$. Since no actual particles are supposed to be found in the vacuum state (it is, after all, the ground sate of all the quantum harmonic oscillators that compose the field), any nonzero expectation value of the particle number operator, $\braket{\hat{N}_g}$, can be attributed to virtual particles. Heisenberg's uncertainty principle predicts that, for sufficiently short time intervals, energy fluctuations allow such particles to briefly come into existence, being annihilated in particle-antiparticle collisions right after (please note that the photon is its own antiparticle). Considering that $\braket{\hat{N}_g}$ tends to zero as the time extension of the Gaussian profile, Eq. \eqref{Gaussian}, increases, it is consistent to attribute the non-zero particle number to the subcycle character of the mode $g$, with increasing particle numbers as the time scale of the mode becomes shorter.

The parameter $\theta_{g,\perp}$ is defined by Eq. \eqref{EqnDefofthetagperp}, and can be understood as the amount of overlap between positive- and negative-frequency terms within $\hat{a}_g$. This parameter can be found in Eq. \eqref{EqnSubcycleVariance3} and parametrizes the amount of quadrature correlation (through the covariance) within this mode. The quadrature correlation between two orthogonal quadratures, ${X}_g(\phi)$ and $\hat{X}_g(\phi+\pi/2)$, is defined as
\begin{equation}
\begin{aligned}
\frac{1}{2}\left\langle\left\{\hat{X}_g(\phi),\hat{X}_g(\phi+\pi/2)\right\}\right\rangle &-\langle\hat{X}_g(\phi)\rangle \langle\hat{X}_g(\phi+\pi/2)\rangle
\\&=2\mathrm{Im}[\langle\hat{a}_g^2\rangle e^{-2i\phi}]\, .
\end{aligned}
\end{equation}
We then define the quadrature correlation within the mode described by $\hat{a}_g$ to be the maximum of $2\mathrm{Im}[\langle\hat{a}_g^2\rangle e^{-2i\phi}]$ with respect to the phase of its argument,
\begin{equation}
2|\langle\hat{a}_g^2\rangle|= 2|\cosh(\theta_{g})\sinh(\theta_{g})\sin(\theta_{g,\perp})|^2\, .
\end{equation}
When $\theta_{g,\perp}=0$, there are no quadrature correlations, in the sense that the difference between the variances of any pair of orthogonal quadratures, $\langle \hat{X}^2_g(\phi)-\hat{X}^2_g(\phi+\pi/2)\rangle = 4 \Re \{\langle \hat{a}_g^2 \rangle e^{-2i\phi}\}\propto\sin (\theta_{g,\perp}) $, is zero; in this case, the (virtual) particles of $\hat{a}_g$ obey a thermal distribution. On the other hand, when $\theta_{g,\perp}=\pi/2$, there is maximal quadrature correlation, meaning that the virtual particles of $\hat{a}_g$ are distributed according to a pure single-mode squeezed state [see Eq. \eqref{EqnSpectralDecompositionSimplified}].

{For a Gaussian pulse, both the average photon-count/variance, parametrized by $\theta_g$, and the difference between the maximum and minimum variances, parametrized by $\theta_{g,\perp}$, increase as the temporal features of $\hat{a}_g$ become more subcycle (in other words, they increase alongside $\sigma$). To show this result, we plot the maximal and minimal (Minkowski) vacuum-state variances of the quadratures in Fig. \ref{FigPlotSubcycle} as functions of $\sigma$. The maximal and minimal variances are given by the variances of $\hat{P}_g={\hat{X}_g(\pi/2)}$ and $\hat{Q}_g={\hat{X}_g(0)}$, respectively, (i.e. $V_{\mathrm{max}}(\sigma)=\braket{\hat{P}_g^2(\sigma)}$ and $V_{\mathrm{min}}(\sigma)=\braket{\hat{Q}_g^2(\sigma)}$). It can be seen from this plot that $V_{\mathrm{max}}$ and $V_{\mathrm{min}}$ become larger and smaller, respectively, as the temporal extension of the Gaussian pulse decreases. Thus {a subcycle vacuum mode} has more virtual particles and quadrature correlation with a shorter temporal profile. It is noted that $\hat{Q}_g$ and $\hat{P}_g$ are related via a $-\pi/2$ phase shift applied on $\hat{a}_g$. This differs from a $-\pi/2$-phase shift applied directly on the single frequency annihilation operators (i.e. $\hat{a}_\omega\rightarrow {\hat{a}_{\omega}}e^{-i\pi/2 \sign(\omega)}$), as $\hat{a}_g$ consists of both positive- and negative-frequency modes.

We note that, in general, $\theta_g$ and $\theta_{g,\perp}$ can be varied independently of each other if we do not restrict ourselves to a Gaussian waveform. For example, a Rindler mode (a field mode as seen by an observer that moves with constant proper acceleration, i.e. along a hyperbolic worldline confined to a [Rindler] wedge of spacetime) is a special case for which $\theta_{g,\perp}=0$, with the right (left) Rindler mode operator (related to the right [left] Rindler wedge) being given by a Bogoliubov transformation, namely the two-mode squeezing, between two modes defined in Minkowski spacetime. This means that the detected particles in this (Rindler) mode follow thermal statistics, having correlation/entanglement only between creation and annihilation operators of orthogonal modes. For a Gaussian mode we see that the amount of thermalization (i.e. the difference between the blue line and blue dashed line in Fig. ~\ref{FigPlotSubcycle}) is very small and hence most of the correlation lies within this same mode.
\section{Subcycle sampling with an Unruh-DeWitt detector}\label{SecSubCycleUnruh}
For a number of interesting phenomena encountered in (curved-space) quantum field theory, realistic measurement schemes are either intractable or impossible to implement in view of the extreme and unusual conditions. One of the main detection models proposed to overcome these theoretical difficulties, the UDW detector, relies on a simple (dipole-like) system interacting with the field during a properly selected time interval while following a trajectory given by a desired worldline. Such construction, albeit essentially theoretical, finds parallel in simple light-matter interacting systems such as qubits or electrons in quantum wells \cite{Cleland, Schoelkopf, Alfred_wells, Bastard}; however, a fully tailorable ultrashort time-dependent coupling to the electromagnetic field (with both switching on and off under control) has, to the best of our knowledge, not yet been fully implemented in the lab. Aiming at overcoming these limitations, we introduce here the description of a bosonic UDW which will later be properly translated in terms of an optical system of easy implementation in the lab. 

\subsection{Unitary operator for the evolution of the Unruh-DeWitt detector}
One can reproduce (up to minor deviations related to differences between bosonic and the usually employed fermionic excitations \cite{Hotta2015Partner}) an idealized UDW detector with an energy gap of $\hbar\omega_u$, by considering a quantum system with a one-dimensional harmonic-oscillator behavior.
The detector's annihilation operator, $\hat{u}$, satisfies the commutation relation $[\hat{u},\hat{u}^{\dag}]=1$. We consider the initial state of the UDW detector to be the ground-state, i.e. $\hat{u}\ket{0_u}=0$.

We let this UDW detector interact with the right-moving Klein-Gordon conjugate field. The interaction Hamiltonian is defined as follows:
\begin{gather}
\hat{H}_I(\tau)= A \lambda(\tau)\hat{Q}(\tau) \hat{\Pi}\big(t(\tau),x(\tau)\big)\, ,\label{EqnInterHamUDW}
\\
\hat{Q}=\sqrt{\frac{\hbar {c}}{2  A}}\left(\hat{u} e^{-i {\omega_u} \tau}+\hat{u}^{\dag} e^{i {\omega_u} \tau}\right) \, .
\label{UDWinteraction}
\end{gather}
In these equations, the dependence on the transversal extension of the field, $A$, cancels out when integration over the transversal coordinates is taken into account in the derivation of the interaction Hamiltonian. It is worth noting, however, that by keeping $A$ explicitly in Eq. \eqref{UDWinteraction} one allows for its interpretation as the cross-sectional area of the detector, i.e. travelling-wave modes with transversal extension given by $A$ are detected. Here and in what follows, $\tau$ is the proper time parametrizing the world-line $\big(t(\tau),x(\tau)\big)$ of the UDW detector. The coupling strength (also known as switching function), $\lambda(\tau)$, characterizes the proper-time interval {during which the UDW} detector is coupled to the field.

The corresponding interaction unitary operator, $\hat{U}_{I,\mathcal{T}}$, is given by
\begin{equation}
\hat{U}_{I,\mathcal{T}}=\mathcal{T} \exp\Big[\frac{-i}{\hbar}\int_{-\infty}^{\infty} \hspace{-3.5 mm}\mathrm{d}\tau \,\hat{H}_I(\tau)\Big]\, ,
\end{equation}
where $\mathcal{T}$ stands for time-ordering protocol. {By choosing an operational regime in which time-ordering effects can be neglected \cite{Silberhorn, Quesada2014, Lipfert2018}, we justify the usage of $\hat{U}_I$, the equivalent of $\hat{U}_{I,\mathcal{T}}$ in the absence of $\mathcal{T}$, as the interaction unitary operator in the remainder of this work.} Fig. \ref{FigVisualUnruh} illustrates the coupling between the conjugate field and the UDW detector.

\subsection{Sampling {of a} Gaussian{-}profile mode with an Unruh-DeWitt detector}

{When one considers a stationary UDW detector following the world-line $(\tau,0)$ and interacting with the (conjugate) field in its vacuum state for all time with $\lambda = \text{const}$, the detector's response is solely related to the field mode described by the operator $\hat{a}_{\omega_u}$. As the coupling strength, $\lambda(\tau)$, acquires some time dependence, the resulting modulation of the interaction in time leads to a frequency broadening over the probed mode(s), i.e. to a modulation of the envelope of a measured-mode profile with carrier frequency $\omega_u$. The more localized in proper time this coupling is (meaning that the detector-field coupling switches on and off within a finite time interval), the broader the frequency band of the field that the detector probes. We can therefore utilize the switching function to control the time interval and frequency band for which a field is probed. The resultant probed field mode has its envelope dictated by the switching function and a carrier frequency $\omega_u$.

The UDW detector can interact with a mode with a Gaussian profile by setting the switching function to be
\begin{equation}
\lambda(t)=\eta \exp\big(-\sigma_u^2(t-t_u)^2\big)\label{EqnSwitchingFunction} \, .
\end{equation}
$\eta$ can be interpreted as the coupling{-strength amplitude for the interaction} between the detector and the conjugate field. {The parameters $\sigma_u$ and $t_u$ define, respectively, the inverse temporal extension (i.e. bandwidth) and the initial time shift of the pulse-like coupling.} By setting the switching function and the energy {gap of the UDW} detector to be consistent with {Eq. \eqref{Gaussian} (i.e. $\sigma_u=\sigma$, $t_u=t_0$ and $\omega_u=\omega_0$),} the interaction unitary operator can be simplified to (refer to App. \ref{AppInputOutput})
\begin{align}
 \hat{U}_{I}&= \exp [\theta_{u} (\hat{a}_g\hat{u}^{\dag}-h.c.)]\,. \label{EqnDeWittUnitary}
\end{align}
$\theta_{u} = -\frac{\eta}{2}\sqrt{\frac{\omega_0}{\sigma}}\left(\frac{\pi}{2}\right)^{1/4}$ can be interpreted as the effective interaction strength between $\hat{a}_g$ and $\hat{u}$. This unitary operator resembles a beam-splitter{-}type interaction between the {UDW} detector, {described by the operator} $\hat{u}$, and the Gaussian{-}profile mode {corresponding to} $\hat{a}_g$. {In the subcycle regime when $\langle \hat{N}_g \rangle \neq 0  $,} this unitary operator maps the virtual particles in the vacuum field to real excitations in the UDW detector.
\subsection{Detection mechanism}
The Heisenberg evolution of the UDW-detector's annihilation operator as dictated by $\hat{U}_I$ leads to an exchange of particles between the two modes involved in Eq. \eqref{EqnDeWittUnitary}, i.e.
\begin{gather}
\hat{u}'=\hat{U}_I^{\dag}\hat{u}\hat{U}_I=\cos(\theta_{u})\hat{u}+\sin(\theta_{u})\hat{a}_g \, .
\end{gather}
Analogously to the discussion preceeding Eq. \eqref{EqnSubcycleVariance1}, the Gaussian character of the phase-space distributions for the state $\text{tr}_{\perp u'}\{|0_u\rangle|0\rangle \langle 0|\langle 0_u|\}$ guarantees that a full description of this state can be done solely in terms of the expectation values of $\hat{X}_{u'}(\phi)= {\hat{u}'}e^{-i \phi}+{\hat{u}'}{}^{\dag}e^{i \phi}$ and corresponding variances, $\braket{\hat{X}^l_{{u'}}(\phi)}=\text{tr}\{|0_u\rangle|0\rangle \langle 0|\langle 0_u|\hat{X}^l_{u'}(\phi)\}$ with $l=1,2$. These read
\begin{gather}
\braket{\hat{X}_{u'}}=\cos(\theta_u)\braket{\hat{X}_{u}}+\sin(\theta_u)\braket{\hat{X}_{g}} \, ,
\label{EqnUnruhDeWittXu}
\\
\braket{\hat{X}_{u'}^2}=\cos^2(\theta_u) \braket{\hat{X}_{u}^2}+\sin^2(\theta_u) \braket{\hat{X}_{g}^2} \, .
\label{EqnUnruhDeWittXu^2}
\end{gather}
{Due to the particle exchange between the detector and field induced by the evolution, Eq. \eqref{EqnDeWittUnitary}, the variance of the detector's quadrature operator contains $\hat{X}_g$ terms.}
The extra factor of $\sin(\theta_{u})$ ($\sin^2(\theta_{u})$) is related to the efficiency of the detector.

We find that the first moment is $\braket{\hat{X}_{{u'}}}=0$, while substituting the results from Eqs. (\ref{EqnSpectralDecompositionSimplified}) and (\ref{EqnSubcycleVariance1}) gives for the second moment
\begin{equation}
\begin{aligned}
\braket{\hat{X}_{{u'}}^2(\phi)}=1&+2\sin^2(\theta_{u})\braket{\hat{a}_g^{\dag}\hat{a}_{g}}
\\&+2\sin^2(\theta_{u}) \Re[\braket{\hat{a}_g^2}e^{- 2 i \phi}]\, .
\end{aligned}
\label{EqnDetectorVariance}
\end{equation}
{When $\sin(\theta_{u})=1$, the Unruh-DeWitt detector has unit efficiency and {maps all} of the virtual particles in $\hat{a}_g$ to excitations of $\hat{u}$, giving an identical result to Eq. (\ref{EqnSubcycleVariance1}).
\subsection{Quantum-optical analogue}
A UDW detector with a coupling strength of subcycle character can detect virtual particles in the field vacuum, $|0\rangle$, which otherwise would remain concealed to inertial observers. The complete subcycle switching on and off, however, has yet to be implemented in a realizable system \cite{Alfred_wells, Bastard}. In what follows, we shall show that a closely related optical system can in fact provide the desired coupling.

As a prototype to our UDW detector we study the broadband $\chi^{(2)}$ interaction in a nonlinear crystal \cite[Chap. 16]{Yariv1989} driven by a strong coherent pump, $\hat{E}(\omega) \rightarrow \alpha_g(\omega)$ [refer to Eq. (\ref{EqnSchi2Compact1}) for comparison]. The corresponding evolution operator for the states propagating through the crystal contains in its exponent all possible bilinear combinations of annihilation/creation operators with frequency variables running continuously over the ranges determined by a frequency-dependent interaction strength. In order to achieve a similar structure for the interaction between the UDW detector and conjugate field, we recast Eq. \eqref{EqnDeWittUnitary} in the form
\begin{gather}
\hat{U}_I= \exp [\hat{S}^{(+)}_{I} + \hat{S}_I^{(-)}] \, , \label{EqnQuantumOpticsInterpretation}
\\
\hat{S}^{(+)}_{I}=  \int_0^{\infty} \hspace{-3.5 mm}\mathrm{d}\omega \, {\zeta_g(\omega)}\alpha_g(\omega_u-\omega)\hat{a}_{\omega}\hat{u}^{\dag}-h.c. \, ,
\\
\hat{S}^{(-)}_{I}= \int_0^{\infty}\hspace{-3.5 mm} \mathrm{d}\omega \, {\zeta_g(\omega)}\alpha_g(\omega_u+\omega)\hat{a}_{\omega}^{\dag}\hat{u}^{\dag} -h.c. \, ,	
\end{gather}
where we have defined the following quantities:
\begin{gather}
\alpha_g (\omega)= e^{-\frac{\omega^2}{4\sigma^2}+i\omega t_0}  \label{alpha_gomega}
 \, ,
\\
\zeta_g (\omega)= \sign(\omega)\frac{\eta}{2\sigma} \sqrt{\frac{|\omega|}{2}} \, .
\label{analogyparameters}
\end{gather}
The action term $\hat{S}_I^{(+)}$ has the shape of a beam-splitter-type interaction, while $\hat{S}_I^{(-)}$ corresponds to a squeezing-type interaction. Note that while $\hat{a}_g$ contains both positive- and negative-frequency operators, $\hat{S}_I^{(\pm)}$ contains solely operators for which $\omega > 0$.

Tracing a parallel to the broadband $\chi^{(2)}$ interaction, $\hat{S}^{(+)}_{I}$ ($\hat{S}^{(-)}_{I}$) resembles the action of sum-(difference-)frequency generation. It is noted that the Gaussian form of Eq. \eqref{alpha_gomega} comes from the profile of the switching function, while for the boradband $\chi^{(2)}$ interaction, this parameter comes from the Gaussian profile of the strong coherent pump. As a result, the strong coherent pulse in the nonlinear interaction replaces the role of the switching function in the UDW detector. $\zeta_g(\omega)$ is a frequency-dependent function, reflecting the degree of fulfillment of the phase-matching condition in an analogue nonlinear-optical setup \cite[Chapter 2.3]{Boyd}, \cite{Dorfman2016}, taken here in the limit case of perfect phase matching (when we may effectively neglect the frequency dependence of $n_{\omega}$ for the given length of the utilized nonlinear crystal). The main difference between the evolution defined by Eqs. \eqref{EqnQuantumOpticsInterpretation}-\eqref{analogyparameters} and the actual broadband $\chi^{(2)}$ interaction is the restriction of one of the frequencies (and therefore one of the annihilation/creation operators) to the fixed value $\omega_u$ defined by the UDW energy gap in the former. This represents, however, no shortcoming in this analogy, since a similar restriction can also be achieved through frequency post-selection of the outgoing photons. We shall therefore further analyse such an effective nonlinear interaction from the perspective of a UDW-detector implementation.
\section{Bridging quantum optics and relativistic quantum information}
Before giving the precise description of the actual nonlinear interaction that reproduces a UDW-detector behavior, we shall briefly introduce and justify the approximations playing a major role in the following sections.
The action of sum- and difference-frequency generation, when we treat the nonlinear crystal to be driven by a classical coherent pump, can generally be written in the form
\begin{gather}
\hat{S}= \int_{-\infty}^{\infty}\hspace{-3.5mm}\mathrm{d}\omega\int_{-\infty}^{\infty}\hspace{-3.5mm}  \mathrm{d}\omega'\, S(\omega,\omega') \hat{a}_{\omega}^{\dag} \hat{u}_{\omega'} \, , \label{EqnS2General}
\\
\hat{U}= \exp[\hat{S}]\label{EqnS2General0}\, ,
\end{gather}
{where the operators $\hat{u}_{\omega}=\hat{u}^\dagger_{-\omega}$ can in general be related to a new set of bosonic modes (therefore commuting with $\hat{a}_{\omega}$ for all frequencies) or be the same as $\hat{a}_{\omega}$ depending on the structure of the susceptibility tensor of the nonlinear crystal, $\chi^{(2)}_{ijk}(\omega+\omega',\omega,\omega')$. In this section we consider the prior, satisfying the commutation relation $[\hat{u}_{\omega},\hat{u}^\dagger_{\omega'}]=\sign(\omega)\delta(\omega-\omega')$ and $[\hat{u}_{\omega},\hat{a}^\dagger_{\omega'}]=0$. Since Eqs. \eqref{EqnS2General}-\eqref{EqnS2General0} define a unitary operator, the condition $S(\omega,\omega')=-S^*(-\omega,-\omega')$ must be satisfied.} 

Heisenberg evolution of the annihilation operators according to this unitary operator (i.e. $\hat{u}_{\omega}'=\hat{U}^{\dag}\hat{u}_{\omega}\hat{U}$) leads to operators described by an infinite series of convolutions over increasing numbers of frequencies \cite{Thiago}. A more transparent and simpler, yet fully analytical, presentation of the core properties (e.g. whether the interaction can be modelled as a beam-splitter-type interaction or squeezing-type interaction) of the system under study can be achieved through introduction of a technique we shall refer to as the first-order unitary evolution.

\subsection{Discrete-mode decomposition}
By decomposing $\hat{u}_{\omega}$ in terms of an arbitrary discrete basis set $\{\hat{u}_i,\hat{u}_i^{\dag}\}$ (refer to Eq. \eqref{decompositionforladder}), Eq. (\ref{EqnS2General}) can be recast in the form:
\begin{gather}
\hat{S} = \sum_{i} \theta^{(1)}_{i}( \bar{a}_{i}\hat{u}_{i}^{\dag}-h.c.)\, ,\label{prescription1}
\\
\bar{a}_{i} = \frac{1}{\theta^{(1)}_{i}}[\hat{u}_{i},\hat{S}]\, ,
\\
\theta^{(1)}_{i} = \sqrt{\Big|\big[[\hat{u}_{i},\hat{S}],[\hat{S},\hat{u}_{i}^{\dag}]\big]\Big|}\label{prescription3} \, .
\end{gather}
$\theta^{(1)}_{i}$ is the normalization factor ensuring $\bar{a}_i$ is normalized: $|[\bar{a}_i,\bar{a}_i^{\dag}]|=1$. When $[\bar{a}_i, \bar{a}_i^{\dag}]=1$, $\bar{a}_i$ follows the properties of an annihilation operator and hence $\bar{a}_i=\hat{a}_i$. When $[\bar{a}_i, \bar{a}_i^{\dag}]=-1$, $\bar{a}_i$ follows the properties of a creation operator and hence $\bar{a}_i=\hat{a}_i^{\dag}$. Note that in general  $[\bar{a}_i,\bar{a}_{j}^{\dag}] \neq 0$ for $i\neq j$. As these operators are not orthogonal, a closed (nonperturbative) expression for the Baker-Hausdorff lemma is not possible, and it motivates us to introduce the first-order unitary approximation.
\subsection{First-order unitary evolution}
The $n$th-order unitary evolution is a technique we develop in App. \ref{AppNthOrderUni}. It simplifies the evolution of an operator by considering a slightly modified, easier to handle, action such that the evolution of operators exactly satisfies the Baker-Hausdorff lemma (expressed in terms of the original action) to a desired $n$th order.  For the first-order unitary evolution, the evolution of the annihilation operators for a specific mode $g$ is computed via a unitary evolution for which only the $g$-containing terms in Eq. \eqref{prescription1} are considered, i.e. the evolution operator is approximated by:
\begin{gather}
\hat{U}^{[1]}=\exp[ \hat{S}^{[1]}] \, ,
\\
\hat{S}^{[1]}= \theta^{(1)}_{{g}}(\hat{u}_{{g}}^{\dag}\bar{a}_{g}-h.c.)\, .\label{EqnFirstorderunitaryaction}
\end{gather}

The first-order unitary evolution is then found to be 
\begin{align}
\hat{u}_{g}^{[1]} &= \hat{U}^{[1]}{}^{\dag}\hat{u}_{g}\hat{U}^{[1]}\label{EqnFirstOrderUnitaryEval1}
\\&=\begin{cases}
\cos\big(\theta^{(1)}_{{g}}\big)\hat{u}_g-\sin\big(\theta^{(1)}_{{g}}\big)\hat{a}_g,\, & \text{if} \; [\bar{a}_g,\bar{a}_{g}^{\dag}]=1\cam{\,;}
\\ \cosh\big(\theta^{(1)}_{{g}}\big)\hat{u}_g-\sinh\big(\theta^{(1)}_{{g}}\big)\hat{a}_g^{\dag},\, & \text{if} \; [\bar{a}_g,\bar{a}_g^{\dag}]=-1\cam{\;.}
\end{cases}\nonumber
\end{align}
Since the approximation is made for the action occurring in the exponent (i.e. $\hat{S}\approx \hat{S}^{[1]})$, the first-order unitary evolution contains, in terms of the Baker-Hausdorff lemma expansion, terms of arbitrarily high order in the interaction strength, $\theta_{g}^{(1)}$. In fact, Eq. \eqref{EqnFirstOrderUnitaryEval1} represents a nonperturbative result. The first-order approximation might seem to be inconsistent with our goal of calculating both first and second moments of the quadrature operators, since the latter is quadratic in $\theta^{(1)}_{g}$. 
We show in App. \ref{AppNumerics}, however, that a complete analysis of the evolved annihilation and creation operators in Eq. \eqref{EqnFirstOrderUnitaryEval1} in terms of the second-order unitary approximation leads to additional terms with negligible contribution to the moments we are interested in.
These terms are only formally needed to guarantee that the commutation relations for the operators are accurate to second order after the evolution. We will therefore not consider them in the following discussions and derivations, focusing instead on the first-order evolution, as described by Eq. \eqref{EqnFirstOrderUnitaryEval1}.

\section{Electro-optic sampling}
We are now ready to properly address the real-world counterpart of our UDW detector. Assuming that different polarization components of the vector fields can be treated independently, we consider the components of the electric field (the conjugate of the vector potential in quantum  electrodynamics) to be proportional to conjugate Klein-Gordon fields\footnote{In the Lorenz gauge, the (massless) Klein-Gordon and the Helmholz equations (descibing, respectively, the Klein-Gordon and vector-potential fields) have the same functional form. Apart from belonging to different representations of the Lorentz group, leading to different behaviors under Lorentz transformations (see, e.g. \cite[pp. 81-88]{Birrel}), each component of the vector potential behaves as an independent Klein-Gordon field, therefore justifying the approximation.}. 
We match each of the two polarization components of the field, $\nu \in \{s,z\}$ -- cf. Fig.~\ref{FigExperimentalSetup}, to an independent massless Klein-Gordon conjugate field, as defined by Eq. \eqref{EqnConjufateField}, through the relation:
\begin{gather}
\begin{aligned}
\hat{E}_\nu(t,x) &=
\int_{-\infty}^{\infty}\hspace{-3.5 mm} \mathrm{d} \omega \, E_{\omega.\nu}(t,x) \hat{a}_{\omega} \, 
\\
&=-\sqrt{\frac{1}{\varepsilon_0}} \hat{\Pi}_\nu(t,x)\; ,
\end{aligned}
\\
E_{\omega,\nu}(t,x) = \text{sign}(\omega)i\sqrt{\frac{\hslash|\omega|}{4 \pi n_{\omega}{c} \varepsilon_0 A}}e^{-i \omega(t- \frac{n_\omega x}{c})} \; .
\end{gather}
From now on, $n_{\omega}$ will stand for the refractive index of the nonlinear crystal in which the {$\chi^{(2)}$} process takes place (we assume it to be isotropic in terms of the linear optical properties). We have included into the field amplitudes the constant factor of $-1/\sqrt{\varepsilon_0}$, with ${\varepsilon_0}$ being the vacuum permittivity, so that the commutation relations between creation and annihilation operators are preserved: $[\hat{a}_{\omega,\nu},\hat{a}_{\omega',\nu'}^{\dag}]=\; \delta(\omega-\omega')\sign(\omega)\delta^{\nu}_{\nu'}$ (in terms of the discrete modes described by Eqs. \eqref{nonmonochromatic}-\eqref{f_of_ai}, one has $[\hat{a}_{i,\nu},\hat{a}_{j,\nu'}^{\dag}]=\; \delta^i_{j}\delta^{\nu}_{\nu'}$).
\subsection{Hamiltonian of a $\chi^{(2)}$ electric-field interaction}\label{SecInterHamNonLinear}

The electro-optic effect utilizes the optical response of the polarization in a nonlinear crystal to mediate an effective nonlinear interaction of the electric field with itself. In the specific case of the Pockels effect \cite{Boyd},  the outgoing (i.e. generated) field depends quadratically on the incoming field, with a proportionality constant dependent on the effective second-order susceptibility of the medium, $\chi^{(2)}$, determined by its corresponding tensor components. This interaction is one of the possible mechanisms behind squeezing \cite[Chapter 8.2.3]{Vogel}. It can also be used to probe an electric field of given polarization and frequency range by additionally impinging a copropagating strong coherent probe field having suitable polarization and frequency range. In the latter case, the probe polarization is dictated by the structure of the susceptibility tensor, while its frequency range should be chosen to have minimal overlap with the sampled field's spectrum. The duration of the probe pulse should be shorter than the period or the characteristic time scale of the sampled radiation, giving the subcycle resolution required for the electro-optic sampling. Quantum versions of experiments of this kind have managed to detect the electric-field variance of the electromagnetic vacuum in the mid-infrared (MIR) \cite{Alfred_vacuum}  and terahertz \cite{Faist} frequency ranges.
We shall therefore focus on a similar measurement scheme.

We represent the sampled-frequency photons by $\hat{a}^\dagger_{\Omega}$ ($\Omega$ being a MIR frequency), while the detected-frequency-range particles are represented by $\hat{a}^\dagger_{\omega}$ ($\omega$ being a near-infrared, or NIR, frequency).  The nonlinear interaction is modulated by a short and strong coherent electric-field pulse with an amplitude given by $\alpha$. We assume that this pulse is short enough so that the interaction time, dictated by $\alpha_p(t)$, is subcycle relative to the sampled MIR  field. As in Ref.~\cite{Andrey}, we consider the incoming probe field to be linearly polarized along the $z$-direction (see Fig.~\ref{FigExperimentalSetup}), while the sampled vacuum modes are restricted to the perpendicular $s$-polarization, so as is the newly generated (through the Pockels effect) quantum correction to the NIR field (this restriction is enforced by the structure of the second-order susceptibility tensor of zincblende-type materials).

We model the resulting effective interaction through the following Hamiltonian:
\begin{equation}
\hat{H}_{\chi}(t)= \int_{-\infty}^{\infty}\hspace{-3.5 mm} \mathrm{d}x \,\lambda \hat{E}_z(t,x)
\hat{E}_s(t,x) \hat{E}_s(t,x)\rect\left(\frac{x}{L}\right)\,.
\end{equation}
The coefficient $\lambda = \frac{A \varepsilon_0 d}{2}$ includes the cross-sectional area {$A$}, the $s$-polarized-field permutation factor of $1/2$ (avoids double counting)  and the coupling constant $d=-n^4 r_{41}$, expressed in terms of the electro-optic (susceptibility) coefficient, $r_{41}$, and the refractive index of the crystal,  $n=n_{\omega_p}$, at the central frequency of the probe, $\omega_p$. rect$(x/L)$ is a rectangular distribution with value 1 for $-L/2 \leq x \leq L/2$
 and 0 otherwise, representing the spatial extension of the nonlinear crystal of thickness $L$.

We consider the $z$-polarized field to be strongly displaced via $\hat{D}=\exp[\alpha \hat{a}_{p,z}^{\dag}-\alpha^* \hat{a}_{p,z}]$ (with $|\alpha| \gg 1$), leading to a strong (semi-)classical coherent field. The annihilation operator $\hat{a}_{p,z}$ is defined through Eqs. \eqref{nonmonochromatic} and \eqref{f_of_ai} with a (complex) electric-field waveform
\begin{equation}
E_p (\omega) = \frac{-1}{\sqrt{4 \pi}N_p}\left(\frac{1}{2 \pi}\right)^{1/4}\sqrt{\frac{|\omega|\epsilon_0}{n_{\omega_p} \sigma_p}}e^{-\frac{(\omega-\omega_p)^2}{4 \sigma_p^2}+i (\omega t_p+	\phi_p)} \, , \label{waveformE}
\end{equation}
where $N_p$ is a normalization constant ensuring $[\hat{a}_p,\hat{a}_p^{\dag}]=1$.
As the $z$-polarized field is in a strong coherent state, we may utilize the mean-field approximation to express the corresponding operator in terms of its coherent amplitude, leading to:
\begin{equation}\label{EqnEOHamiltonainMeanField}
\hat{H}_{\chi}(t)= \int_{-\infty}^{\infty} \hspace{-3.5 mm}\mathrm{d}x \,\lambda \alpha_p(t,x)
\hat{E}(t,x) \hat{E}(t,x)\rect\left(\frac{x}{L}\right)\, .
\end{equation}
The polarization subscripts will from this point forward be omitted as we treat the electric field  with $z$-polarization (semi-)classically and the remaining field operators have the same polarization. $\alpha_p (t,x) = E_p(t,x)\alpha + E^*_p(t,x)\alpha^*$ is the real probe  amplitude for a given (complex) pulse profile $E_p(t,x)$ [see {, e.g.,} Eq. \eqref{Gaussian}].

The exponent of the evolution operator for such interaction is defined as \begin{equation} \label{EqnActionEO}
\hat{S}=-\frac{i}{\hbar} \int_{-\infty}^{\infty}\hspace{-3.5 mm}\mathrm{d}t\, \hat{H}_{\chi}(t) \, .
\end{equation}
Assuming negligible overlap between frequencies in the NIR and MIR, we split the electric field into the detector-frequency range and the sampled-frequency range,
\begin{equation}
\hat{E}(t,x)=\hat{E}_{\Omega}(t,x)+\hat{E}_{\omega}(t,x) \, ,
\end{equation}
where the relation $[\hat{a}_{\omega},\hat{a}^\dagger_{\Omega}]=0$ is satisfied due to the difference in their frequency ranges. We neglect $E_p(t,x)\hat{E}_\omega(t,x)\hat{E}_\omega(t,x)$ and $E_p(t,x)\hat{E}_\Omega(t,x)\hat{E}_\Omega(t,x)$ as they are highly oscillatory terms, averaging out to contributions close to zero through the rotating-wave approximation. We then integrate with respect to space and time to obtain (cf. App. \ref{AppNonLinHam})
\begin{equation}
\hat{S}=\int_{|\Omega|<\Lambda}\hspace{-3.5 mm} \mathrm{d}\Omega \int_{\Lambda < |\omega|}\hspace{-3.5 mm}\mathrm{d}\omega \, S(\Omega,\omega)\hat{a}_{\Omega}\hat{a}_{\omega}^{\dag}  \, ,\label{EqnSchi2Compact1}
\end{equation}
where
\begin{gather}
S(\Omega,\omega)= \alpha_p(\omega-\Omega) \label{what_is_S}
\zeta_{\Omega,\omega} \, ,
\\ \label{EqnPhaseMatching1}
\zeta_{\Omega,\omega} = -i\sign(\omega \Omega)\frac{\lambda L}{ A c \epsilon_0 }\sqrt{\frac{|\omega \Omega|}{n_{\omega} n_{\Omega}}}  \sinc\left(\eta_{\omega,\Omega}\right) \, ,
\\
{\eta_{\Omega,\omega}=\frac{L}{2c}\left[\omega(n_{\omega}-n_{\omega-\Omega})-\Omega (n_\Omega-n_{\omega-\Omega})\right]\, .}\label{EqnPhaseMatching2}
\end{gather}
We note that $\hat{S}^\dagger=-\hat{S}$ is fulfilled as $\alpha_p (\omega) = E_p(\omega)\alpha + E_p^*(-\omega)\alpha^*$ satisfies $\alpha_p(\omega)=\alpha_p^*(-\omega)$. We have introduced a transition frequency, $\Lambda$, in order to avoid frequency crossing between the MIR and the NIR. $\zeta_{\omega,\Omega}$ determines the phase matching between $\hat{a}_{\omega}$ and $\hat{a}_{\Omega}$ and sinc$(x)=\sin (x)/x$.
\subsection{First-order unitary evolution in a nonlinear crystal}

In this subsection we implement the first-order unitary evolution to reduce Eq. (\ref{EqnSchi2Compact1}) to a form resembling Eq. (\ref{EqnQuantumOpticsInterpretation}). To do this, we must introduce a bosonic mode operator that resembles $\hat{u}$, i.e. a mode to which we can assign the role of a (UDW) detector.
The UDW detector is described by a harmonic oscillator with a well-defined energy gap of $\hbar \omega_u$. 
We introduce a narrow-frequency-band mode centered at $\widetilde{\omega}$ to mimic the UDW detector's (discrete) single-frequency mode:
\begin{equation}
\hat{\mathfrak{u}}_{\widetilde{\omega}}=\int_{-\infty}^{\infty} \hspace{-3.5 mm}\mathrm{d}\omega \,\frac{1}{\sqrt{\Delta \omega}}\text{rect}\left(\frac{\widetilde{\omega}-\omega}{\Delta\omega}\right)\hat{a}_{\omega} \, . \label{filter}
\end{equation}
The bandwidth $\Delta \omega$ can be as small as band-pass filters allow in real-world experiments. On top of such restrictions, decreasing $\Delta \omega$  will also filter out more photons, meaning less photons will be detected in the output. While one can span the whole NIR frequency range in terms of modes of the form \eqref{filter} with non-overlapping frequency windows of width $\Delta \omega$, we shall focus on a single such frequency window. By taking $\hat{u}_g\rightarrow \hat{u}_{\widetilde{\omega}}$ in Eq. \eqref{EqnFirstorderunitaryaction}, one can see that the mode operator corresponding to $\bar{a}_g$ has the form
\begin{gather}
\bar{\mathfrak{a}}_{\widetilde{\omega}} = \frac{1}{\theta^{(1)}_{\widetilde{\omega}}} [\hat{S},\hat{\mathfrak{u}}_{\widetilde{\omega}}^{\dag}] =\int_{|\Omega|<\Lambda}\hspace{-3.5 mm} \mathrm{d}\Omega\, f_{\widetilde{\omega}}(\Omega)\hat{a}_{\Omega},
\\
f_{\widetilde{\omega}}(\Omega)=\frac{1}{\theta_{\widetilde{\omega}}^{(1)}\sqrt{\Delta \omega}}\int_{\widetilde{\omega}-\Delta \omega/2}^{\widetilde{\omega}+\Delta \omega/2} \hspace{-3.5 mm}\mathrm{d}\omega\, \alpha_p(\omega-\Omega) \zeta_{\Omega,\omega} \; ,\label{waveformint}
\end{gather}
where
$\theta^{(1)}_{\widetilde{\omega}}= \sqrt{|[[\hat{S},\hat{\mathfrak{u}}_{\widetilde{\omega}}^{\dag}],[\hat{\mathfrak{u}}_{\widetilde{\omega}},\hat{S}]]|}$. In the limit of sufficiently small $\Delta\omega$,
$f_{\widetilde{\omega}}(\Omega)\approx  \frac{\sqrt{\Delta \omega}}{\theta_{\widetilde{\omega}}^{(1)}} \alpha_p(\widetilde{\omega}-\Omega) \zeta_{\Omega,\widetilde{\omega}}$. One can see that it closely approximates the respective expression one would expect for the actual UDW case:
$f_g(\Omega)=\frac{1}{\theta_u}\alpha_g(\omega_u-\Omega)\zeta_g(\Omega)$.
The main source of discrepancy between $f_{\widetilde{\omega}}(\omega)$ and $f_g(\Omega)$ can be attributed to the phase matching parameters, $\zeta_{\Omega,\tilde{\omega}}$ and $\zeta_{g}$. As a result, in the case of perfect phase matching, $\alpha_p(\widetilde{\omega}-\Omega)$ takes the role of $\alpha_g(\omega_u-\Omega)$. In this regime, the envelope of the probed field mode is determined by {$|E_{p}(t,0)|$}, while the carrier frequency is determined by $|\widetilde{\omega}-\omega_p|$. In Fig. \ref{FigWaveFormElectroOptic}, we plot the temporal waveforms of $f_{\widetilde{\omega}}(\Omega)$ and $E_{p}(t,0)$ for a given choice of $\widetilde\omega$. It is found that the temporal width of the former is slightly larger than that of $E_{p}(t,0)$ due to the difference in ($\widetilde\omega$-dependent) phase velocity. 
\begin{figure}[t]
\includegraphics[width=0.45\textwidth]{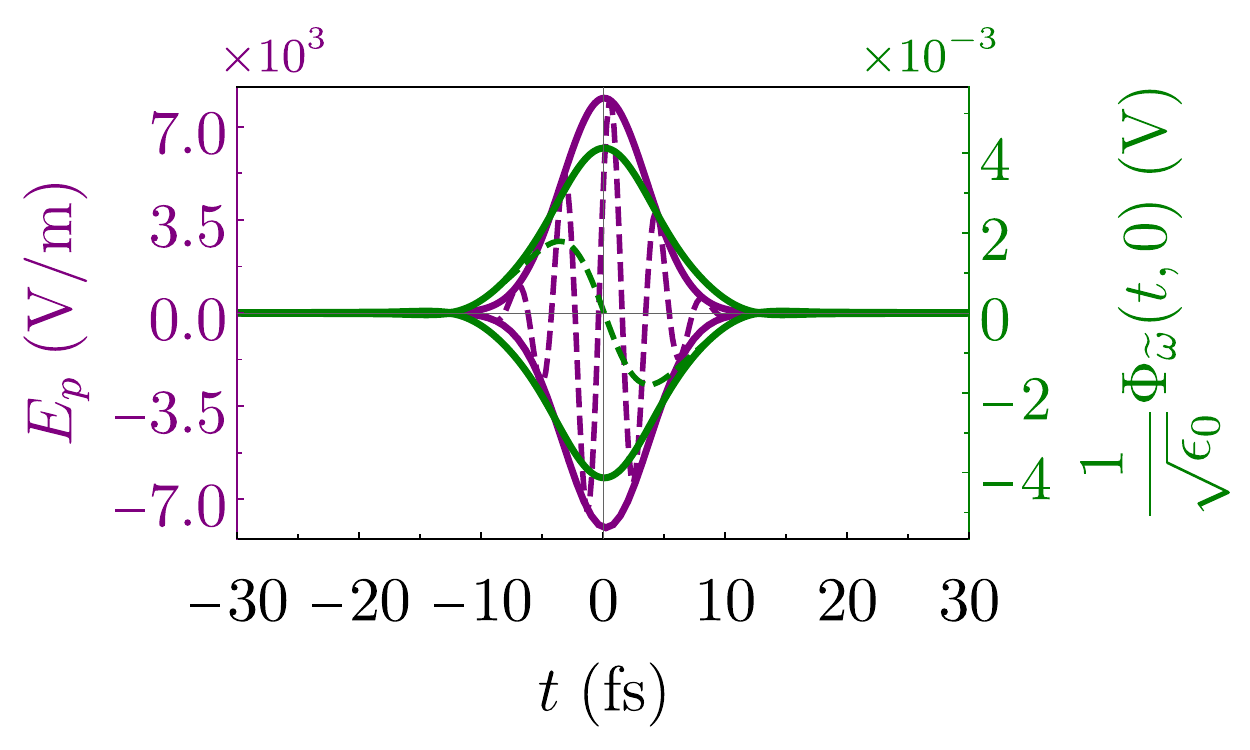}
\caption{(a) Amplitude of the scalar-field mode for the probed waveform $f_{\widetilde{\omega}}(\Omega)$, as given by Eqs. \eqref{phi_discrete}-\eqref{f_of_ai}. The green solid lines represent the envelope of the probed subcycle mode, $\pm\big|\sqrt{\frac{1}{\varepsilon_0}}\Phi_{\widetilde\omega}(t,0)\big|$. Its real part is given by the dashed green line. (b) The purple solid lines show $\pm|E_p(t,0)|$, the envelope of the probe pulse that drives the interaction, with a dashed purple line representing its real part. We have set {$t_p=0$ and} $\widetilde{\omega}=\omega_p+1.5\sigma_p$, where {$\omega_p/(2\pi)=255$ ~THz}, $\sigma_p=\sqrt{2 {\log 2}}/t_{1/2}$ and {$t_{1/2}=5.8$ ~fs}.}
\label{FigWaveFormElectroOptic}
\end{figure}

We compute the first-order unitary evolution on the filtered NIR-frequency operator as follows:
\begin{gather}
\hat{\mathfrak{u}}_{{\widetilde{\omega}}}'\approx \hat{\mathfrak{u}}_{{\widetilde{\omega}}}^{[1]}= \hat{U}^{[1]}{}^{\dag}\hat{\mathfrak{u}}_{{\widetilde{\omega}}}\hat{U}^{[1]} \, , \label{crazy_a}
\\
\hat{U}^{[1]}= \exp[\theta^{(1)}_{\widetilde{\omega}}(\bar{\mathfrak{a}}_{\widetilde{\omega}}\hat{\mathfrak{u}}^\dagger_{\widetilde{\omega}}-h.c.)]\, .
\end{gather}
This is a simple two-mode interaction between $\bar{\mathfrak{a}}_{\widetilde{\omega}}$ and $\hat{\mathfrak{u}}_{\widetilde{\omega}}$. If $\bar{\mathfrak{a}}_{\widetilde{\omega}}=\hat{\mathfrak{a}}_{\widetilde{\omega}}$, the interaction between these two modes is a beam-splitter-type interaction described by Eq. \eqref{EqnDeWittUnitary}. In this regime, the Hamiltonian of the nonlinear electric-field interaction can be modelled as an interaction between a UDW detector $\hat{\mathfrak{u}}_{\widetilde{\omega}}$ and the subcycle field mode $\hat{\mathfrak{a}}_{\widetilde{\omega}}$. This association is not possible in case $\bar{\mathfrak{a}}_{\widetilde{\omega}}=\hat{\mathfrak{a}}_{\widetilde{\omega}}^{\dag}$, for which the interaction is modelled as a two-mode squeezing interaction. In other words, when measurements are conducted in the regimes for which $\bar{\mathfrak{a}}_{\widetilde{\omega}}=\hat{\mathfrak{a}}_{\widetilde{\omega}}$ holds, the non-linear interaction promotes the (vacuum) virtual-particles to real excitations in the NIR frequencies.

In real-world experiments, the detection of the mode corresponding to Eq. \eqref{filter} can be approximated by photon counting with setups in which the detected photons are restricted to a given (narrow) frequency band. This can be achieved, e.g., through insertion of high- and low-pass filters before the photodetectors \cite{Philipp} (see App. \ref{AppBalancedHomodyne} for more details). The (filtered) $\nu$-polarization-photon number operator, $\hat{N}_{\widetilde{\omega},\nu}= \int_{\widetilde{\omega}-\Delta\omega/2}^{\widetilde{\omega}+\Delta\omega/2}\mathrm{d}\omega\, \hat{a}_{\omega,\nu}^{\prime\dag}\hat{a}'_{\omega,\nu}$, can be shown to be proportional to $\hat{\mathfrak{u}}_{\widetilde{\omega}}^{[1]}$ to leading order in the probe amplitude $\alpha$.

\subsection{Ellipsometry Scheme}
We are interested in the quadrature variance of $\bar{\mathfrak{a}}_{\widetilde{\omega}}$ for various ${\widetilde{\omega}}$. As described in the discussion around Eq. \eqref{EqnDetectorVariance}, measurement of the UDW detector's mode, represented by $\hat{\mathfrak{u}}_{\widetilde{\omega}}$ in the present formulation, conveys, within the validity of adopted approximations, most of the information one needs about the sampled field. It therefore allows for the characterization of the main statistical features of the quasiprobability distributions describing the state associated with $\bar{\mathfrak{a}}_{\widetilde\omega}$.

Electro-optic sampling makes use of ellipsometry to implement a functionality similar to homodyning, e.g. both ellipsometry and homodyning rely on the linear superposition of two fields. In this section, we shall focus our discussion on ellipsometry, while further analogy to its mathematically equivalent scheme, the (polarization-based) balanced-homodyne detection, is provided in App. \ref{AppBalancedHomodyne}.
We consider an ellipsometry scheme as depicted in Fig. \ref{FigExperimentalSetup}. 

\begin{figure}[t]\includegraphics[width=0.45\textwidth]{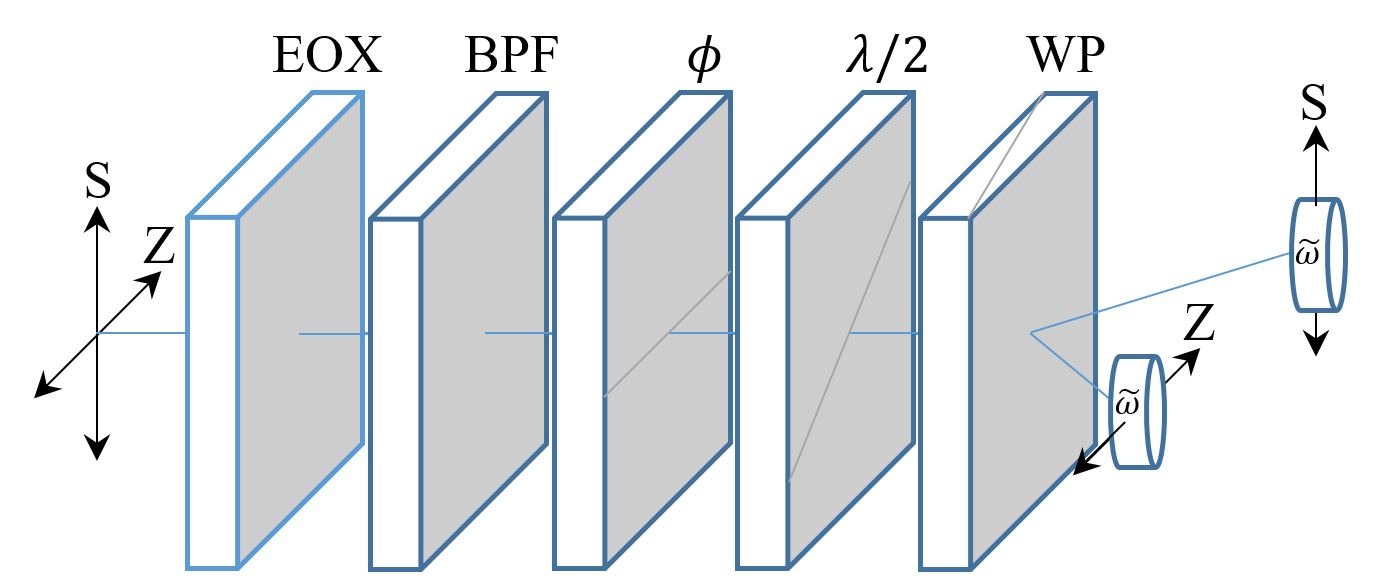}
\caption{The electric field propagates through the {electro-optic crystal} (EOX) for the coherent pulse to induce an interaction with the vacuum. The outgoing field propagates through a narrow band-pass filter (BPF), filtering out outside the narrow-band frequency window $\widetilde\omega-\Delta \omega/2 \leqslant \omega \leqslant \widetilde\omega+\Delta \omega/2$. This is followed by a $\phi_z$ wave-plate, which applies to the field a $\phi$-phase shift in the {$z$-polarization}. Then the field passes through a half-wave plate ($\lambda/2$) at an angle of $\pi/8$ from the horizontal (see diagonal light-grey line) and a Wollaston prism (WP), which physically splits the s and z components of the field. Each output is detected by a photon counter.
}\label{FigExperimentalSetup}
\end{figure}
{This scheme utilizes a $\phi$-waveplate (i.e. a $\phi$-phase shifter) to change the phase of the $z$-polarized field. This is followed by a $\pi$-waveplate at an angle of $\pi/8$ rotated relative to the $z$-axis on the plane perpendicular to the propagation axis (cf. Ref.~\cite{Philipp}). A Wollaston prism then spatially splits the electric field into its $s$ and $z$ components, which are separately measured by photo-detectors. For the measurement of the modes described in this section, the electric field additionally passes through a band-pass filter of width $\Delta \omega$ around $\widetilde\omega$ after the non-linear crystal (EOX). 
The result of such a measurement can be related to Eq. \eqref{EqnDetectorVariance}: $
\braket{(\hat{\mathfrak{u}}^{[1]}_{\widetilde\omega}e^{-i\phi}+\hat{\mathfrak{u}}^{[1]}_{\widetilde\omega}{}^\dagger e^{i\phi})^2}=1+2\sin^2(\theta^{[1]}_{\widetilde\omega})\braket{\bar{\mathfrak{a}}_{\widetilde\omega}^{\dag}\bar{\mathfrak{a}}_{\widetilde\omega}}+2\sin^2(\theta^{[1]}_{\widetilde\omega}) \Re[\braket{\bar{\mathfrak{a}}_{\widetilde\omega}^2\
e^{- 2 i \phi}}]$. In fact, the quadrature operator $\hat{X}^{[1]}_{\widetilde\omega}(\phi)=\hat{\mathfrak{u}}^{[1]}_{\widetilde\omega}e^{-i\phi}+\hat{\mathfrak{u}}^{[1]}_{\widetilde\omega}{}^\dagger e^{i\phi}$ can have its expectation values directly extracted from the electro-optic measurement through \cite{PhysRevX.9.011007}
\begin{equation}
\hat{X}^{[1]}_{\widetilde\omega}(\phi) \approx \frac{\hat{N}_{\widetilde{\omega},z}(\phi)-\hat{N}_{\widetilde{\omega},s}(\phi)}{\sqrt{\braket{\hat{N}_{\widetilde{\omega},z}(\phi)+\hat{N}_{\widetilde{\omega},s}(\phi)}}} \, ,
\label{EOquadrature}
\end{equation}
where $\hat{N}_{\widetilde{\omega},\nu}(\phi)$ can be obtained from $\hat{N}_{\widetilde{\omega},\nu}$ by application of the proper $\phi$-dependent rotation matrices on its annihilation and creation operators.

For the specific cases of $\phi=0$ and $\phi=\pi/2$ with $\bar{\mathfrak{a}}_{\widetilde{\omega}}=\hat{\mathfrak{a}}_{\widetilde{\omega}}$ (i.e. the UDW-detector regime), a slight reformulation of Eq. \eqref{EqnFirstOrderUnitaryEval1}  allows us to write the variances of $\hat{X}^{[1]}_{\widetilde\omega}(\phi)$ in a way similar to Eq. \eqref{EqnUnruhDeWittXu^2}:
\begin{gather}
\begin{aligned}
\Big\langle\! \left(\hat{X}^{[1]}_{\widetilde\omega}(0)\right)^{\!2}\!\Big\rangle &= \cos^2\!\theta^{(1)}_{\widetilde{\omega}}\langle \hat{Q}_{\mathfrak{u}}^2\rangle + \sin^2\!\theta^{(1)}_{\widetilde{\omega}}\langle \hat{Q}_{\mathfrak{a}}^2\rangle\\
&\hspace{-12mm}=1+2\sin^2\!\theta^{(1)}_{\widetilde{\omega}}\braket{\hat{\mathfrak{a}}_{\widetilde\omega}^{\dag}\hat{\mathfrak{a}}_{\widetilde\omega}}
+2\sin^2\!\theta^{(1)}_{\widetilde{\omega}} \Re[\braket{\hat{\mathfrak{a}}_{\widetilde\omega}^2}]\, , \label{min_var}
\end{aligned}
\\
\begin{aligned}
 \Big\langle\!\left(\hat{X}^{[1]}_{\widetilde\omega}\left(\frac{\pi}{2}\right)\right)^{\!2}\!\Big\rangle &= \cos^2\!\theta^{(1)}_{\widetilde{\omega}}\langle \hat{P}_{\mathfrak{u}}^2\rangle + \sin^2\!\theta^{(1)}_{\widetilde{\omega}}\langle \hat{P}_{\mathfrak{a}}^2\rangle
\\&\hspace{-14mm}=1+2\sin^2\!\theta^{(1)}_{\widetilde{\omega}}\braket{\hat{\mathfrak{a}}_{\widetilde\omega}^{\dag}\hat{\mathfrak{a}}_{\widetilde\omega}}
-2\sin^2\!\theta^{(1)}_{\widetilde{\omega}} \Re[\braket{\hat{\mathfrak{a}}_{\widetilde\omega}^2}]\, .  \label{max_var}
\end{aligned}
\end{gather}
$Q$ and $P$ stand for the two orthogonal phase-space quadratures and the subscripts $\mathfrak{u}$ and $\mathfrak{a}$ represent the $\hat{\mathfrak{u}}_{\widetilde\omega}$ and $\hat{\mathfrak{a}}_{\widetilde\omega}$ operators, respectively. In this regime, we obtain a similar result to the case of UDW detector (refer to Eqs. \eqref{EqnUnruhDeWittXu}-\eqref{EqnDetectorVariance}). The pure $\hat{\mathfrak{u}}_{\widetilde{\omega}}$-terms (i.e. $\braket{\hat{Q}_u^2}$ and $\braket{\hat{P}_u^2}$) are associated with the NIR shot noise, which sums up with similar contributions from the $\hat{\mathfrak{a}}_{\widetilde\omega}$-terms to give 1. The latter terms ($\braket{\hat{Q}_{\mathfrak{a}}^2}$ and $\braket{\hat{P}_{\mathfrak{a}}^2}$) contain the ($\widetilde\omega$-dependent) information about the ultrabroadband MIR mode (i.e. subcycle mode) of interest. We note that $\hat{Q}_{\mathfrak{a}}$ and $\hat{P}_{\mathfrak{a}}$ are related via a phase shift of $\pi/2$ on $\hat{\mathfrak{a}}_{\widetilde\omega}$, not on the positive-frequency modes, $\hat{a}_{\Omega}$, due to reasons discussed in Sec. \ref{SecSubcycleProp}.

\subsection{Numerical results}
In order to visualize the analytical results of this section, we consider realistic values for the variables involved in the electro-optical analogue of the UDW detector \citep{Alfred_vacuum,Andrey}. For the probe waveform, Eq. \eqref{waveformE}, we assume a central frequency of $\omega_p/(2\pi)=255 \,\mathrm{THz}$ and a temporal intensity profile with a full-width half maximum of $5.8\, \mathrm{fs}$ (therefore $\sigma_p\approx 203 \,\mathrm{THz}$). The effective cross-sectional area of the beam (waist) is $A= \pi r^2$ with $r=3 \, \mathrm{\mu m}$ and the probe-pulse photon content is $|\alpha|^2=5 \times 10^9$. The length of the zincblende-type crystal is $L=7 \, \mathrm{\mu m}$ and its electro-optic coefficient is taken as $r_{41}=4~\mathrm{pm/V}$ (for the particular case of ZnTe). The refractive index, $n_{\Omega}$, varies only slightly (from 2.55 to 2.59) in the MIR \cite{Andrey}. We utilize a fit for the refractive index in the NIR frequency range $n_{{\omega}}$ \cite{Marple1964} (more details on the refractive index are given in App. \ref{AppNonLinHam}). We set $\Delta \omega/(2\pi)=1 \, \mathrm{THz}$ to ensure that this frequency band is small enough so that the (quasi-)monochromatic approximation is valid, but large enough so that the photon count is {sufficiently} high. Figure \ref{FigPlot1stOrd} shows the maximal and minimal quadrature variances (i.e. $\hat{P}^{[1]}_{\widetilde\omega}$- and $\hat{Q}^{[1]}_{\widetilde\omega}$-variances respectively) 
for various detected central frequencies, $\widetilde{\omega}$.

\begin{figure}[t]
\includegraphics[width=0.45\textwidth]{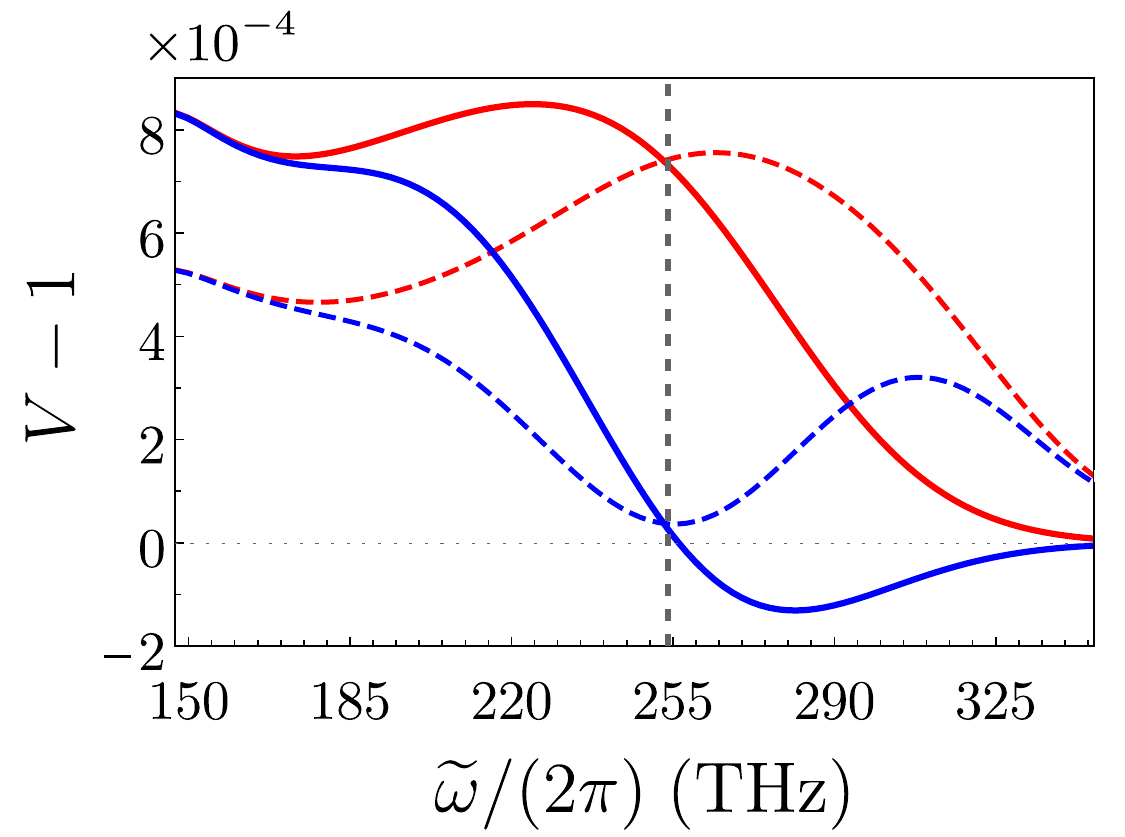}
\caption{Numerical plot of the {$Q$- and $P$-}quadrature variances {obtained} with the first-order unitary evolution method. The top line (solid red) is the {$P$-quadrature} variance, while the bottom line (solid blue) represents the {$Q$-quadrature} one. The dashed lines represent the numerical result when we utilize the standard first-order perturbation theory. In this plot, there is a vertical line near $\omega_p$. To the right of it, the electro-optic sampling can be modelled as {a UDW} detector. 
}
\label{FigPlot1stOrd}
\end{figure}

Through the first-order unitary evolution we are able to delineate two important regimes, portrayed in Fig. \ref{FigPlot1stOrd} via a dividing vertical dashed line. The region on the right-hand side of this line corresponds to the regime in which the interaction can be modelled as a beam-splitter operation between $\hat{\mathfrak{u}}_{\widetilde{\omega}}$ and $\hat{\mathfrak{a}}_{\widetilde{\omega}}$. {We identify this regime as the subcycle probing of the vacuum according to the sampling mechanism of a UDW detector.} In this frequency range, deviations of both quadrature variances from the value of 1 are related to the sampling of virtual particles.
On the other hand, when $\widetilde{\omega}\lesssim \omega_p$, one gets $\bar{\mathfrak{a}}_{\widetilde{\omega}}\to\hat{\mathfrak{a}}_{\widetilde{\omega}}^\dagger$ and therefore the interaction between the two modes is modelled as a squeezing-type operation. The detected particles can be attributed to one half of the photon pairs created by this two-mode squeezing, with the trace over the subspace corresponding to the other half therefore explaining the larger than shot-noise (i.e. $>1$) values for both quadratures in this frequency range (thermalization). It is worth noting that thermalization is also seen in the UDW regime in Fig. \ref{FigPlot1stOrd}: the product of the two variances shows that this is not a minimal-uncertainty state. This thermalization effect is a signature of entanglement breakage/redistribution between virtual particles and the UDW detector, hinting at the possibility of harnessing (vacuum) entanglement from these particles.

For the sake of comparison, we include  in Fig. \ref{FigPlot1stOrd} similar results for the standard first-order perturbation theory as in Ref.~\cite{Andrey}. The key difference between these approaches lies in the prediction of $\hat{\mathfrak{u}}_{\widetilde\omega}$-quadrature measurements that are below unity for some $\widetilde\omega$ values when the first-order unitary approximation is employed. This shows that for $\tilde{\omega}\gtrsim \omega_p $ the entanglement between the involved modes is underestimated when the first-order perturbation theory is applied, while thermalization effects are overestimated. These two contributions compensate each other when integration over frequencies is considered, therefore leading to negligible disagreement between the integrated variances presented here and in Refs.~\cite{Andrey, Thiago, Matthias}. The prediction of sub-shot-noise variance through the first-order unitary approximation represents an important step in understanding the properties of the electromagnetic vacuum, considering sub-shot-noise quadrature variance as one of the trademarks for characterization of quantumness.

\section{Conclusion}
In summary, we characterized a massless bosonic field mode with a subcycle Gaussian profile. Its subcycle character renders it broadband enough to encompass both positive and negative frequencies, resulting in a nonzero photon-number expectation value in the mode, even though the background field state is the Minkowski vacuum state. The photons present in the vacuum are off-shell (virtual) particles and therefore cannot exist outside of very short time intervals. We show that a simple harmonic-oscillator UDW detector interacting with the field through a very fast switching on and off of the interaction can couple to the subcycle Gaussian mode and therefore detect its virtual particles. We then translate this behavior to the language of electro-optic sampling by finding the regime in which the nonlinear electro-optic interaction reproduces the UDW-field interaction Hamiltonian. Since electro-optic sampling is an inherently subcycle technique, the fast switching on and off of the coupling in the optical system is guaranteed by the ultrashort pump that drives the interaction. From these results, it is possible to identify the contributions from virtual particles (i.e from the quantum vacuum itself) to the signal variance detected in quantum electro-optic measurements of the electromagnetic vacuum. 

The comparison between the normalized action of the UDW detector and the nonlinear $\chi^{(2)}$ interaction of the electric field within an optically active crystal is enabled by the introduction of a novel first-order unitary evolution approximation. This method models the nonlinear interaction of the electric field as either a beam-splitter interaction or two-mode squeezing between a detected mode and a subcycle mode. When the beam-splitter interaction dominates, excitations of the subcycle mode are mapped directly onto the probe mode in an analogous way as for the UDW detector. On the other hand, when the two-mode squeezing dominates the process, the interaction is similar to the usual squeezing seen in non-subcycle quantum optics.

Our numerical results for a setting similar to what is found in the literature \cite{Andrey, Alfred_vacuum} allow us to estimate the transition point between regimes in which two-mode squeezing and beam-splitting are predominant, respectively, therefore providing a rigorous delineation of the conditions necessary to achieve a proper mapping of a $\chi^{(2)}$ interaction into a UDW detector. Furthermore, we have proposed an ellipsometry scheme to allow for detection of the first and second moments of the electric-field quadratures of such subcycle modes of the vacuum.

The detection of virtual photons from the vacuum field is a feature of (curved-space) quantum field theory effects such as Unruh-Davies \cite{Unruh} and Hawking radiation \cite{Hawkingeffect}. In these effects, the virtual particle and antiparticle of the pair are separated by a horizon, the Rindler and event horizons for Unruh and Hawking effects, respectively. In such situations the particle and antiparticle of the pair are necessarily delocalized, and hence entangled. For the subcycle UDW detector, a horizon can be introduced via the fast switching on and off of the interaction. This fast switching can decouple the virtual antiparticle from the observed virtual particle, leading in principle to the ability to observe vacuum entanglement between different regions of spacetime \cite{Reznik}. In our work, we do not observe strong decoupling from the entangled virtual antiparticle. The inability to strongly decouple the antiparticles can be traced back to the Gaussian-profile switch. The Gaussian profile leads to a ``soft" horizon in which the probe predominantly detects both the particle and the antiparticle in the same mode. The signature of this effect is the squeezing that we observe in both Fig. \ref{FigPlotSubcycle}
and \ref{FigPlot1stOrd}. As a future research direction, it would be interesting to explore in detail how the statistics of the subcycle mode is affected by the profile of the switching function and hence seek pump profiles that are compatible with the observation of vacuum entanglement.
\section{Acknowledgements}
This work is supported by the Australian Research Council (ARC) under the Centre of Excellence for Quantum Computation and Communication Technology (Grant No. CE170100012). T.L.M.G. and A.S.M. gratefully acknowledge the funding by the Baden-W\"{u}rttemberg Stiftung via the Elite Programme for Postdocs. A.S.M. was also supported by the National Research Foundation of Korea (NRF) grant funded by the Korea government (MSIT) (2020R1A2C1008500). T.L.M.G., A.L. and G.B. acknowledge funding by the Deutsche Forschungsgemeinschaft
(DFG) - Project No. 425217212 - SFB 1432. The authors thank D.V. Seletskiy for helpful discussions.

\onecolumngrid

\appendix
\section{Input-output formalism for an Unruh-DeWitt detector}\label{AppInputOutput}
In this section, we consider the input-output relation of the Unruh De-Witt detector{, with $\hat{u}$ and $\hat{u}'$ being the input and output mode operators,} respectively. The output can be calculated via the Heisenberg evolution of the input {operator. Using} the interaction Hamiltonian (\ref{EqnInterHamUDW}), we introduce the evolution operator
\begin{equation}
\hat{U}_{I,\mathcal{T}}=\mathcal{T}\left[\exp\Big[\frac{-i}{\hslash}\int_{-\infty}^{\infty} \hspace{-3.5 mm}\mathrm{d}\tau \, A\lambda(\tau)\hat{Q}(\tau) \hat{\Pi}\big( t(\tau),x(\tau)\big) \Big]\right]	\, , \label{evolution_no_TO}
\end{equation}
where $\mathcal{T}$ is the time-ordering {protocol}. The output under this unitary operator can be calculated as follows:
\begin{equation}
\hat{u}'= \hat{U}_{I,\mathcal{T}}^{\dag}\hat{u}\hat{U}_{I,\mathcal{T}}	{\, .}
\end{equation}
This can be evaluated via a Magnus expansion, which is difficult to compute non-perturbatively. In certain regimes, the time-ordering effect can be neglected \cite{Silberhorn}, allowing for the following approximation:
\begin{equation}
\hat{U}_{I,\mathcal{T}}\approx \hat{U}_{I}= \exp\left[\frac{-i}{\hslash}\int_{-\infty}^{\infty}\hspace{-3.5 mm} \mathrm{d}\tau\, A\lambda(\tau)\hat{Q}(\tau) \hat{\Pi}\big( t(\tau),x(\tau)\big) \right]	{ \, .}
\end{equation}
The Baker-Hausdorff lemma can now be utilized to compute $\hat{u}'${.} By noting that $\hat{\Pi}$ is Hermitian, we write the unitary operator in the following way:
\begin{equation}
\hat{U}_I=\exp\left[\left(-\frac{i}{\hslash}\int_{-\infty}^{\infty}\hspace{-3.5 mm} \mathrm{d}\tau\,A \lambda(\tau) \hat{\Pi}\big( t(\tau),x(\tau)\big)e^{i\omega_u\tau}\right)\hat{u}^{\dag}-\mathrm{h.c.} \right] {\, .}
\end{equation}
{Inserting Eqs. (\ref{EqnConjufateField}) and (\ref{EqnSwitchingFunction}) into the previous expression and considering a worldline of the form $(t,x)=(\tau, 0)$, we obtain}
\begin{gather}
\begin{aligned}
\hat{U}_I&=\exp\left[\left(-\int_{-\infty}^{\infty} \hspace{-3.5 mm}\mathrm{d}\omega\,\text{sign}(\omega) \eta \sqrt{\frac{|\omega|}{8\pi}}\int_{-\infty}^{\infty}\hspace{-3.5 mm} \mathrm{d}t\, e^{-\sigma_u^2(t-t_u)^2-i(\omega-\omega_u) t} \hat{a}_\omega \right)\hat{u}^{\dag}-\mathrm{h.c.} \right]
\\
&=\exp\left[\left(-\int_{-\infty}^{\infty}\hspace{-3.5 mm} \mathrm{d}\omega\, \text{sign}(\omega) \frac{\eta}{2\sigma_u} \sqrt{\frac{|\omega|}{2}}e^{-{\frac{(\omega-\omega_u)^2}{4\sigma_u^2}}-i(\omega-\omega_u) t_u}
\hat{a}_\omega \right)\hat{u}^{\dag}-\mathrm{h.c.} \right] {.}
\end{aligned}
\end{gather}
It is noticed that the term inside the brackets is similar to the spectral decomposition of a Gaussian profile mode{,} Eq. (\ref{EqnDefoffPOmega}). By setting $\sigma_u=\sigma$, $t_u=t_0$ and $\omega_u=\omega_0$, {the unitary evolution can be cast in the form}
\begin{gather}
\hat{U}_I=\exp\left[\left(\frac{\eta}{2}\sqrt{\frac{\omega_0}{{\sigma}}}\left(\frac{\pi}{2}\right)^{1/4}\int_{-\infty}^{\infty}\hspace{-3.5 mm} \mathrm{d}\omega\, f_{g}(\omega) \hat{a}_\omega \right)\hat{u}^{\dag}-\mathrm{h.c.} \right]{,}
\end{gather}
{or alternatively, b}y setting $\theta_{u}=-\frac{\eta}{2}\sqrt{\frac{\omega_0}{\sigma}}$, in the form
\begin{equation}
\hat{U}_I=\exp\left[\theta_{u} (\hat{a}_{g}\hat{u}^{\dag}-\mathrm{h.c.} )\right]{\, .}
\end{equation}
\section{Decomposition of operators}\label{AppQuadraticOperators}
\subsection{Linear and quadratic operator decomposition}
In this section we demonstrate how linear and quadratic operators can be decomposed in terms of an arbitrary basis set. We introduce a complete discrete orthonormal bosonic basis set $\{\hat{a}_i,\hat{a}_j,...\}$ that satisfies the commutation relations $[\hat{a}_i,\hat{a}_{j}]=0$ and $[\hat{a}_i,\hat{a}_{j}^{\dag}]=\delta^i_{j}${.} 

We first consider an arbitrary operator $\hat{A}$, {that is} {linear in both $\hat{a}_i$ and $\hat{a}^\dagger_i$}. Given that this operator belongs to the space spanned by the complete set $\{\hat{a}_i,\hat{a}_j,...\}$, this operator can be decomposed {as}
\begin{equation}
\hat{A}=\sum_i A_i \hat{a}_i+A_{i}' \hat{a}_i^{\dag}{\, .}
\end{equation}
The prefactors in front of $\hat{a}_i$ can be found utilizing the properties of the commutation relations{,}
\begin{equation}
\begin{aligned}
A_i=[\hat{A},\hat{a}_i^{\dag}],\; A_{i}'=[\hat{a}_i,\hat{A}]{,}
\end{aligned}
\end{equation}
giving the following result:
\begin{equation}
\hat{A}=\sum_i [\hat{A},\hat{a}_i^{\dag}] \hat{a}_i+[\hat{a}_i,\hat{A}] \hat{a}_i^{\dag}{\, .}
\end{equation}
Let us {now consider another operator, $\hat{B}$, quadratic in $\hat{a}_i$ and $\hat{a}^\dagger_i$}. This operator can be decomposed in terms of $\{\hat{a}_i,\hat{a}_j,...\}$ in the following way:
{\begin{equation}
\begin{aligned}
\hat{B}& = \sum_{i} \left[B_{ii} \hat{a}_i\hat{a}_i+B'_{ii}\hat{a}_i^{\dag}\hat{a}_i+B''_{ii}\hat{a}_i^{\dag}\hat{a}_i^{\dag}+\sum_{j>i}\left( B_{ij}\hat{a}_i\hat{a}_j + B'_{ij}\hat{a}_i^{\dag}\hat{a}_j+B'_{ji} \hat{a}_j^{\dag}\hat{a}_i+B''_{ij}\hat{a}_i^{\dag}\hat{a}_j^{\dag}\right)\right] {}\, .
\end{aligned}
\end{equation}
}
Utilizing properties of the commutation relations, we {find}
\begin{equation}
\begin{aligned}
{B_{ij}}& =\frac{1}{1+\delta_j^i} \left[[{\hat{B}},\hat{a}_i^{\dag}],\hat{a}_j^{\dag}\right] {\, ,}
\\
{B''_{ij}}&=\frac{1}{1+\delta_i^j} \left[\hat{a}_j,[\hat{a}_i,{\hat{B}}]\right]{\, ,}
\\
{B'_{ij}}&=\left[[\hat{a}_i,{\hat{B}}],\hat{a}_j^{\dag}\right]{\, .}
\end{aligned}
\end{equation}
\subsection{Parallelization and orthogonalization of a quadratic operator}
In this section we introduce a process we refer to as parallelization/orthogonalization with respect to a quadratic operator. We consider {as reference the operator $\hat{a}_k$, a chosen operator within orthonormal the set $\{\hat{a}_i,\hat{a}_j,...\}$, for which $[\hat{a}_i,\hat{a}^\dagger_j]=\delta^i_j$. We can split a normally ordered quadratic operator $\hat{B}$ into components parallel to $\hat{a}_k$ and components orthogonal to $\hat{a}_k$, namely}
\begin{equation}
\hat{B} = \hat{B}_{\parallel k}+\hat{B}_{\perp k} {\, .}\label{EqnparallelizationandorthogonalBn}
\end{equation}
The {components parallel to $\hat{a}_k$ (i.e. terms that do not commute with either $\hat{a}_k$ or $\hat{a}_k^{\dag}$) can be expressed as:
\begin{equation}
\hat{B}_{\parallel k} =  \sum_{j}\left[ B_{kj}\hat{a}_k\hat{a}_j + B'_{kj}\hat{a}_k^{\dag}\hat{a}_j+(1-\delta^j_{k})B'_{jk} \hat{a}_j^{\dag}\hat{a}_k+B''_{kj}\hat{a}_k^{\dag}\hat{a}_j^{\dag}\right]\, ,
\end{equation}
where $(1-\delta^j_{k})$ avoids double counting when $j=k$. We will refer to this decomposition as parallelization. It can be reduced to a compact form:
\begin{equation}
\hat{B}_{\parallel k}=[\hat{B},\hat{a}^{\dag}_k]\hat{a}_k+\hat{a}_k^{\dag}[\hat{a}_k,\hat{B}]-(B_{kk} \hat{a}_k\hat{a}_k+B'_{kk}\hat{a}_k^{\dag}\hat{a}_k+B''_{kk}\hat{a}_k^{\dag}\hat{a}_k^{\dag})\label{EqnparallelizationBn}\, .
\end{equation}
The component orthogonal to $\hat{a}_k$ (i.e. all terms that commute with both $\hat{a}_k$ and $\hat{a}_k^{\dag}$) has the form:
\begin{equation}
\begin{aligned}
\hat{B}_{\perp k}& = \sum_{i\neq k} \sum_{j\geq i,j\neq k} B_{ij}\hat{a}_i\hat{a}_j + B'_{ij}\hat{a}_i^{\dag}\hat{a}_j+(1-\delta^i_j)B'_{ji} \hat{a}_j^{\dag}\hat{a}_i+B''_{ij}\hat{a}_i^{\dag}\hat{a}_j^{\dag} \, .
\end{aligned}
\end{equation}

This operator can {also} be written in a compact way by rewriting Eq. (\ref{EqnparallelizationandorthogonalBn}) as $\hat{B}_{\perp {k}}=\hat{B}-\hat{B}_{\parallel {k}}$ and substiting Eq. (\ref{EqnparallelizationBn}):
\begin{equation}
\hat{B}_{\perp k}=\hat{B}-\left([\hat{B},\hat{a}^{\dag}_k]\hat{a}_k+\hat{a}_k^{\dag}[\hat{a}_k,\hat{B}]-(B_{kk} \hat{a}_k\hat{a}_k+B_{\breve{k}k}\hat{a}_k^{\dag}\hat{a}_k+B_{\breve{k}\breve{k}}\hat{a}_k^{\dag}\hat{a}_k^{\dag})\right){\, .}
\end{equation}
The second-order parallelization with respect to {both} $\hat{a}_k$ and $\hat{a}_{k'}$ gives
\begin{equation}
\hat{B}_{\parallel k k'} =B_{k k'}\hat{a}_k\hat{a}_{k'} + B'_{k k'}\hat{a}_k^{\dag}\hat{a}_{k'}+(1-\delta^k_{k'})B'_{k'k} \hat{a}_{k'}^{\dag}\hat{a}_k+B''_{kk'}\hat{a}_k^{\dag}\hat{a}_{k'}^{\dag} \label{EqnQuadraticParallelization} \, .
\end{equation}

 Any normal ordered operator {quadratic with respect to the set $\{\hat{a}_i,\hat{a}_j,...\}$ can be therefore} decomposed in the following way:
\begin{equation}
\hat{B}=\sum_{i}\sum_{i'\geqslant i}\hat{B}_{\parallel i i'} {\, .}
\end{equation}
\section{{$n$th}-order unitary evolution}\label{AppNthOrderUni}
The $n$th-order unitary evolution is a simplification of the {non-time-ordered evolution operator}, {Eq. \eqref{evolution_no_TO},} based {on} an approximation to the action{:} $\hat{S} \approx \hat{S}^{[n]}$. This section introduces this formalism in the following manner: in section {\ref{AppnthSecDef}}, we introduce an explicit definition of {the formalism, deriving the constraints on the $n$th order action, $\hat{S}^{[n]}$; in section \ref{AppnthSecQuadAct}, we introduce a simple formula for $\hat{S}^{[n]}$ when $\hat{S}$ is a quadratic operator; in section \ref{AppnthSecConstProof}, we prove that the $\hat{S}^{[n]}$ introduced in sec. \ref{AppnthSecQuadAct} satisfies the constraints imposed on a $n$th-order unitary action, as discussed in sec. {\ref{AppnthSecDef}}.}
\subsection{Definition of the formalism}\label{AppnthSecDef}
In this {section}, we introduce the $n$th order unitary evolution method. We begin by considering an arbitrary unitary operator $\hat{U}_{S}$ of the {form}
\begin{equation}
\hat{U}_{S}= \exp[\hat{S}]{\, ,}
\end{equation}
where $\hat{S}$ is an arbitrary {(normalized by $i\hslash$)} action. The Heisenberg evolution of an arbitrary operator, ${\hat{C}'= \hat{U}_S^{\dag} \hat{C} \hat{U}_S}$
, {obeys the Baker-Hausdorff lemma,
\begin{gather}
\begin{aligned}\label{EqnBakerHausdorffai}
\hat{C}_{}'&=\hat{C}_{}+[\hat{C}_{}, \hat{S}]+\frac{1}{2!}\left[[\hat{C}_{}, \hat{S}],{\hat{S}}\right]+\frac{1}{3!}\left[[[\hat{C}_{},{ \hat{S}}],{ \hat{S}}],{ \hat{S}}\right]+...
\\&= \sum_{n=0}^{\infty}\frac{1}{n!}[\hat{C},\hat{S}]^{(n)} \, .
\end{aligned}
\end{gather}
with $[\hat{C},\hat{S}]^{(n)}=[[\hat{C},\hat{S}]^{(n-1)},\hat{S}]$ and $[\hat{C},\hat{S}]^{(0)}=\hat{C}$.
For nontrivial interaction Hamiltonians (contained in the action), the Heisenberg evolution of $\hat{C}$ is usually} difficult to compute. The $n$th-order unitary evolution simplifies this calculation by approximating the normalized action $\hat{S}$ with the normalized $n$th-order unitary action $\hat{S}^{[n]}$. The nth order unitary evolution of {$\hat{C}$ is defined as
\begin{equation}
\begin{aligned}
\hat{C}^{[n]}= \hat{U}_S^{[n]}{}^{\dag}\hat{C}\hat{U}_S^{[n]}= \sum_{n=0}^{\infty}\frac{1}{n!}[\hat{C},\hat{S}^{[n]}]^{(n)} \, .
\end{aligned}
\end{equation}
The $n$th order normalized action $\hat{S}^{[n]}$ is defined so that the evolution is accurate to at least $n$th order when expanded according to the Baker-Hausdorff lemma}:
\begin{equation}\label{EqnConditionforSn}
{[\hat{C},\hat{S}]^{(m)}
=[\hat{C},\hat{S}^{[n]}]^{(m)}, \; \forall\, m\leqslant n\, .}
\end{equation}
\subsection{$\hat{S}^{[n]}$ for quadratic actions}\label{AppnthSecQuadAct}
In this section, we demonstrate a method to determine a simple $\hat{S}^{[n]}$ for an arbitrary normalized quadratic action:
\begin{equation}
\hat{S}=\int_{-\infty}^{\infty} \hspace{-3.5 mm}\mathrm{d}\omega \mathrm{d}\omega' \, S(\omega,\omega') \hat{a}_{\omega}^{\dag} \hat{u}_{\omega'}-h.c. {\, .}
\end{equation}
{To} keep things general{, we} do not specify the commutation relation between $\hat{a}_{\omega}$ and $\hat{u}_{\omega}$. {$\{\hat{a}_{\omega}, \forall\; \omega \in \mathbb{R}\} \cup \{\hat{u}_{\omega},\forall\; \omega \in \mathbb{R}\}$ is} the set {spanning all (linear) operators one can generate with both $\hat{a}_{\omega}$ and $\hat{u}_{\omega}$}. {S}ince we have not specified the commutation relation{s} between $\hat{a}_{\omega}$ and $\hat{u}_{\omega}$, a general treatment would {allow noncommuting} elements {between $\hat{a}_{\omega}$ and $\hat{u}_{\omega'}$.} {When there are noncommuting terms, the union of the sets would not be an orthonormal set} (this would mean that the union of the two subsets may be overcomplete). {For this reason, we introduce a complete orthornormal discrete set of operators $\{\hat{c}_i,\hat{c}_j,...\}$, so that all $\hat{a}_{\omega}$ and $\hat{u}_{\omega}$ can be written as linear combinations of its elements}.

{We represent the $n$th-order evolution component of the ({$k$}th) element of the set $\{\hat{c}_i,...\}$ (i.e. $\hat{c}_k$), as $\hat{\mathfrak{c}}^{(n)}_k$.} This operator is derived in the following way:
\begin{gather}
\bar{\mathfrak{c}}^{(n)}_{k} = [\bar{\mathfrak{c}}^{(n-1)}_{k},\hat{S}] /(\theta^{(n)}_{{k}}) \, \label{Eqnbarmathfrakcn}
\\
{\theta}_{{k}}^{(n)} = \sqrt{\Big|\big[[\bar{\mathfrak{c}}^{(n-1)}_k,\hat{S}],[\hat{S},\bar{\mathfrak{c}}^{(n-1)}_k{}^{\dag}]\big]\Big|} \, ,
\\
\bar{\mathfrak{c}}^{(n)}_{k} = \begin{cases}
\hat{\mathfrak{c}}^{(n)}_{k}& \text{if } \;[\bar{\mathfrak{c}}^{(n)}_{k},\bar{\mathfrak{c}}_{k}^{(n)}{}^{\dag}] > 0\\
    \hat{\mathfrak{c}}_{k}^{(n)}{}^{\dag}&\text{if}\; [\bar{\mathfrak{c}}^{(n)}_{k},\bar{\mathfrak{c}}_{k}^{(n)}{}^{\dag}] < 0
\\
    0 & \text{otherwise} \, . \label{Eqnbarmathfrakcn2}
\end{cases}
\end{gather}
$\hat{\mathfrak{c}}^{(0)}_{k}=\hat{c}_{k}$. Note that the set of evolution components generated from a chosen $\hat{c}_{k}$, $\{\hat{c}_{k},\bar{\mathfrak{c}}^{(1)}_{k},...,\bar{\mathfrak{c}}^{(n)}_{k} \}$, is not orthonormal. We can orthogonalize this set of operators in the following manner (we will omit the index $k$ hereafter):
\begin{gather}
\bar{c}^{(n)} = \bar{\mathfrak{c}}^{(n)}-\left(\sum_{m<n}[\bar{\mathfrak{c}}^{(n)},\bar{\mathfrak{c}}^{(m)}{^{\dag}}]\bar{\mathfrak{c}}^{(m)}+[\bar{\mathfrak{c}}^{(m)},\bar{\mathfrak{c}}^{(n)}]\bar{\mathfrak{c}}^{(m)}{}^{\dag}\right) \, , \label{orthogonalize_c}
\\
\tilde{c}^{(n)}=\begin{cases}
\hat{c}^{(n)}/ \sqrt{\left|\big[\bar{c}^{(n)},\bar{c}^{(n)}{}^{\dag}\big]\right|} & \text{if } \big[\bar{c}^{(n)},\bar{c}^{(n)}{}^{\dag}\big] > 0\\
\hat{c}^{(n)}{}^{\dag}/ \sqrt{\left|\big[\bar{c}^{(n)},\bar{c}^{(n)}{}^{\dag}\big]\right|} & \text{if } \big[\bar{c}^{(n)},\bar{c}^{(n)}{}^{\dag}\big] < 0\\
0              & \text{otherwise} \, .
\end{cases}
\end{gather}
In other words, $\tilde{c}^{(n)}=\bar{c}^{(n)}/\sqrt{\left|\big[\bar{c}^{(n)},\bar{c}^{(n)}{}^{\dag}\big]\right|}$ for $\big[\bar{c}^{(n)},\bar{c}^{(n)}{}^{\dag}\big]\neq 0$, with $\bar{c}^{(n)}$ being either $\hat{c}^{(n)}$ or $\hat{c}^{(n)}{}^\dagger$ depending on the sign of the commutator.
Utilizing the Schmidt decomposition, we can arbitrarily set (as long as the first {$n$ operators $\hat{c}^{(n)}$ are orthogonal to each other) the first {$n$} operators of a complete orthonormal discrete bosonic set. We therefore introduce the complete orthonormal set{, $\{\hat{c}^{(0)},\hat{c}^{(1)},...,\hat{c}^{(n)}\}\cup \{\hat{d}_{1},\hat{d}_{2},...\}$}. We have set $\{\hat{d}_1, \hat{d}_2,...\}$ to be orthogonal to the set $\{\hat{c}^{(0)},\hat{c}^{(1)},...,\hat{c}^{(n)}\}$}. We then define $\hat{S}^{[n]}$ in the following way:
\begin{equation}\label{EqnS^[n]}
\hat{S}^{[n]} = \sum_{m=0}^{n}\sum_{m'=m}^{n}\hat{S}_{\parallel m m'} (\hat{c}^{(0)},\hat{c}^{(0)}{}^\dagger,\hat{c}^{(1)},\hat{c}^{(1)}{}^\dagger,\ldots,\hat{c}^{(n)},\hat{c}^{(n)}{}^\dagger, ) {\, ,}
\end{equation}
where $\hat{S}_{\parallel m m'}$ is defined in Eq. (\ref{EqnQuadraticParallelization}).
\subsection{Proving the validity of $\hat{S}^{[n]}$}\label{AppnthSecConstProof}
In this section, we prove {that} Eq. \ref{EqnS^[n]} leads to an action {accurate} to at least $n$th order in the Baker-Hausdorff{-}lemma expansion. The full expansion of $\hat{S}$ in terms of the {operators in the basis set $\{\hat{c}^{(0)},\hat{c}^{(1)},...,\hat{c}^{(n)}\}\cup \{\hat{d}_1,\hat{d}_2,...\}$ is
\begin{gather}
 \label{EqnSDecomptoS^[n]+}
\hat{S}=\hat{S}^{[n]}+\hat{S}^{[n]}_\perp+\hat{S}_\perp \, ,
\\
\begin{aligned}
\hat{S}^{[n]}&= \sum_{m=0}^{n}\sum_{m'=m}^{n} \hat{S}_{\parallel mm'}(\hat{c}^{(0)},\hat{c}^{(0)}{}^\dagger,\ldots ) \, ,
\\
\hat{S}_\perp & = \sum_{i=1}^{\infty}\sum_{j=i}^{\infty} \hat{S}_{\parallel i j} (\hat{d}_1, \hat{d}^\dagger_1, \dots ) \, ,
\\
\hat{S}^{[n]}_\perp & = \sum_{m=0}^{n}\sum_{i=1}^{\infty} \hat{S}_{\parallel m i} (\hat{c}^{(0)}, \hat{c}^{(0)}{}^\dagger , \ldots ; \hat{d}_1, \hat{d}^\dagger_1 , \ldots )  \, ,
\end{aligned}
\end{gather}
where the sums with respect to $m, m'$ are over the elements of the set $\{\hat{c}^{(0)},\hat{c}^{(1)},...,\hat{c}^{(n)}\}$, while the sums with respect to $i,j$ are} over the elements of the set $\{\hat{d}_1,\hat{d}_2,...\}$.
\begin{proposition}\label{PropTrueifCond12true}
 {$\hat{S}^{[n]}$ is given by Eq. \ref{EqnS^[n]} if the following conditions are} satisfied:
\begin{subequations}
\begin{gather}
{[\hat{S}_{\perp},\bar{\mathfrak{c}}^{(m)}{}^{\dag}]=[\bar{\mathfrak{c}}^{(m)},\hat{S}_{\perp}]=0, \forall\, {m< n}\label{EqnCondForS^[n]ii}} \, ,
\\
{[\hat{S}^{[n]}_\perp,\bar{\mathfrak{c}}^{(m)}{}^{\dag}]=[\bar{\mathfrak{c}}^{(m)},\hat{S}^{[n]}_\perp]=0, \forall\, {m< n}\label{EqnCondForS^[n]if} \, .}
\end{gather}
\end{subequations}
\end{proposition}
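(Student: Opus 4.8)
The plan is to show that the operator $\hat{S}^{[n]}$ defined in Eq.~\eqref{EqnS^[n]} qualifies as an $n$th-order unitary action, i.e. that it satisfies the defining requirement Eq.~\eqref{EqnConditionforSn}: the nested commutators $[\hat{C},\hat{S}]^{(m)}$ and $[\hat{C},\hat{S}^{[n]}]^{(m)}$ must agree for every $m\leqslant n$, where the relevant seed is the chain generator $\hat{C}=\hat{c}^{(0)}$. The starting observation is that, by the very construction of the chain in Eqs.~\eqref{Eqnbarmathfrakcn}--\eqref{Eqnbarmathfrakcn2}, each $m$-fold commutator of the seed with the full action is (up to the normalization factors $\theta^{(j)}$ and the annihilation/creation assignment) proportional to the $m$th chain operator, $[\hat{c}^{(0)},\hat{S}]^{(m)}\propto\bar{\mathfrak{c}}^{(m)}$. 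Thus verifying Eq.~\eqref{EqnConditionforSn} reduces to showing that replacing $\hat{S}$ by $\hat{S}^{[n]}$ regenerates the same chain out to order $n$.

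I would run this by induction on $m$. The case $m=0$ is immediate since both sides equal $\hat{c}^{(0)}$. For the inductive step, assume $[\hat{c}^{(0)},\hat{S}]^{(m)}=[\hat{c}^{(0)},\hat{S}^{[n]}]^{(m)}\propto\bar{\mathfrak{c}}^{(m)}$ for some $m<n$. Taking one further commutator and using linearity together with the complete decomposition $\hat{S}=\hat{S}^{[n]}+\hat{S}^{[n]}_\perp+\hat{S}_\perp$ of Eq.~\eqref{EqnSDecomptoS^[n]+} gives
\begin{equation}
[\bar{\mathfrak{c}}^{(m)},\hat{S}]=[\bar{\mathfrak{c}}^{(m)},\hat{S}^{[n]}]+[\bar{\mathfrak{c}}^{(m)},\hat{S}^{[n]}_\perp]+[\bar{\mathfrak{c}}^{(m)},\hat{S}_\perp]\,.
\end{equation}
The hypotheses \eqref{EqnCondForS^[n]ii} and \eqref{EqnCondForS^[n]if} state precisely that the last two commutators vanish for $m<n$, so $[\bar{\mathfrak{c}}^{(m)},\hat{S}]=[\bar{\mathfrak{c}}^{(m)},\hat{S}^{[n]}]$ and, via Eq.~\eqref{Eqnbarmathfrakcn}, both order-$(m+1)$ commutators equal the same multiple of $\bar{\mathfrak{c}}^{(m+1)}$. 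The induction then delivers agreement for all $m\leqslant n$, which is exactly Eq.~\eqref{EqnConditionforSn}. Note that the conditions are invoked only in the range $m<n$, the last step (producing order exactly $n$) using $m=n-1$, so the stated range is sufficient.

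The argument is structurally short because the two commutator conditions are engineered to kill the perpendicular contributions directly; the real work is bookkeeping rather than conceptual. First, one must confirm that Eq.~\eqref{EqnSDecomptoS^[n]+} is genuinely exhaustive, i.e. that the parallelization of Eq.~\eqref{EqnS^[n]} together with $\hat{S}^{[n]}_\perp$ and $\hat{S}_\perp$ reconstitutes the full $\hat{S}$; this follows from the parallelization/orthogonalization identities of App.~\ref{AppQuadraticOperators}. Second, one must carry the annihilation-versus-creation casework of Eq.~\eqref{Eqnbarmathfrakcn2} through every commutator: whenever $\bar{\mathfrak{c}}^{(m)}$ flips to a creation-type operator the nested commutators acquire the associated sign, but this leaves the \emph{vanishing} of the perpendicular pieces untouched and hence does not disturb the induction. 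I expect this tracking of the normalization factors $\theta^{(m)}$ and the sign conventions across the orthogonalized chain to be the only genuine obstacle, with the structural heart of the proof being the immediate cancellation guaranteed by conditions \eqref{EqnCondForS^[n]ii}--\eqref{EqnCondForS^[n]if}.
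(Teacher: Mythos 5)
Your proposal follows essentially the same route as the paper's proof: both expand $[\hat{c},\hat{S}]^{(m)}$ using the decomposition $\hat{S}=\hat{S}^{[n]}+\hat{S}^{[n]}_\perp+\hat{S}_\perp$, note that the $(m-1)$-fold nested commutator is generated by $\{\bar{\mathfrak{c}}^{(m-1)},\bar{\mathfrak{c}}^{(m-1)}{}^{\dag}\}$ so that the hypotheses kill the two perpendicular contributions, and then close the argument by induction (the paper's ``domino effect''). Your explicit remark that the conditions are only needed for $m<n$, with the final step at $m=n-1$ producing the order-$n$ commutator, is a slightly more careful statement of the same bookkeeping.
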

\begin{proof}
Eq. (\ref{EqnS^[n]}) {gives} $\hat{S}^{[n]}$ if it satisfies Eq. (\ref{EqnConditionforSn}). Utilizing Eq. (\ref{EqnSDecomptoS^[n]+}):
\begin{equation}
\begin{aligned}
{[\hat{c},\hat{S}]^{(m)}
=\big[[\hat{c},\hat{S}]^{(m-1)},\hat{S}^{[n]}\big]+\big[[\hat{c},\hat{S}]^{(m-1)},\hat{S}_{\perp}\big]+\big[[\hat{c},\hat{S}]^{(m-1)},\hat{S}^{[n]}_{\perp}\big]\label{EqnacS^{m}} \, . }
\end{aligned}
\end{equation}
We note that {$[\hat{c},\hat{S}]^{(m-1)}$ is generated by the operators in $\{\hat{\mathfrak{c}}^{(m-1)},\hat{\mathfrak{c}}^{(m-1)}{}^{\dag}\}$} by definition (refer to Eq. (\ref{Eqnbarmathfrakcn})). The conditions  (\ref{EqnCondForS^[n]ii}) and (\ref{EqnCondForS^[n]if})
therefore imply:
\begin{equation}
\begin{aligned}
{\big[[\hat{c},\hat{S}]^{(m-1)},\hat{S}_{\perp}\big]=\big[[\hat{c},\hat{S}]^{(m-1)},\hat{S}^{[n]}_\perp\big]=0, \; \forall\, m \leqslant n \, . }
\end{aligned}
\end{equation}
By substituting this result into Eq. (\ref{EqnacS^{m}}) we obtain
\begin{gather}
{[\hat{c},\hat{S}]=[\hat{c},\hat{S}^{[n]}]}
\\
{[\hat{c},\hat{S}]^{(m)}
=\big[[\hat{c},\hat{S}]^{(m-1)},\hat{S}^{[n]}\big] \, ,
\; \forall \, m \leqslant n\, . }
\end{gather}
Where the first equation is an explicit form of the second equation when $m=1$. By the domino effect, we obtain
\begin{equation}
{[\hat{c},\hat{S}]^{(m)}=[\hat{c},\hat{S}^{[n]}]^{(m)}, \; \forall\, m \leqslant n\, , }
\end{equation}
which completes the proof.
\end{proof}
\begin{proposition}
{$[\hat{S}_{\perp},\hat{\mathfrak{c}}^{(m)}{}^{\dag}]=[\hat{\mathfrak{c}}^{(m)},\hat{S}_{\perp}]=0, \forall\, {m\leqslant n}$} is true.\label{PropTrueifCond12true1}
\end{proposition}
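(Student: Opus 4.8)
The plan is to reduce the statement to two elementary facts: that $\hat{S}_\perp$ is built solely from operators orthogonal to the $\hat{c}^{(m)}$, and that each $\hat{\mathfrak{c}}^{(m)}$ with $m\leqslant n$ lies inside $\mathrm{span}\{\hat{c}^{(0)},\hat{c}^{(0)}{}^{\dagger},\ldots,\hat{c}^{(m)},\hat{c}^{(m)}{}^{\dagger}\}$. By the decomposition in Eq.~\eqref{EqnSDecomptoS^[n]+}, the term $\hat{S}_\perp=\sum_{i}\sum_{j\geqslant i}\hat{S}_{\parallel ij}(\hat{d}_1,\hat{d}_1^{\dagger},\ldots)$ is a normal-ordered quadratic form in the operators $\{\hat{d}_1,\hat{d}_2,\ldots\}$ alone, and these were chosen orthogonal to the whole set $\{\hat{c}^{(0)},\ldots,\hat{c}^{(n)}\}$, so that $[\hat{d}_j,\hat{c}^{(m)}{}^{\dagger}]=[\hat{d}_j,\hat{c}^{(m)}]=0$ for every $j$ and every $m\leqslant n$. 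Given these two facts, bilinearity of the commutator closes the argument, so the real work lies in establishing the span statement and in propagating the elementary orthogonality through the quadratic form.

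First I would prove the span lemma by induction on $m$. The base case $\hat{\mathfrak{c}}^{(0)}=\hat{c}_k=\hat{c}^{(0)}$ is immediate. For the inductive step I would read the orthogonalization recipe of Eq.~\eqref{orthogonalize_c} together with the subsequent normalization of $\tilde{c}^{(m)}$ backwards: they express $\hat{c}^{(m)}$, up to a nonzero scale and a possible exchange of an operator with its adjoint, as $\hat{\mathfrak{c}}^{(m)}$ minus a linear combination of $\{\bar{\mathfrak{c}}^{(m')},\bar{\mathfrak{c}}^{(m')}{}^{\dagger}:m'<m\}$. This is a triangular change of basis, so it can be inverted order by order to give $\hat{\mathfrak{c}}^{(m)}$ as a finite linear combination of $\{\hat{c}^{(m')},\hat{c}^{(m')}{}^{\dagger}:m'\leqslant m\}$, which is exactly the claim. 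The sign-dependent swaps between an operator and its adjoint appearing in Eq.~\eqref{Eqnbarmathfrakcn2} and in the definition of $\tilde{c}^{(m)}$ are harmless here, since the span in question is closed under conjugation.

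Next I would check that $\hat{S}_\perp$ commutes with every basis generator, i.e. $[\hat{S}_\perp,\hat{c}^{(m)}{}^{\dagger}]=[\hat{c}^{(m)},\hat{S}_\perp]=0$ for all $m\leqslant n$. Since each summand $\hat{S}_{\parallel ij}$ (Eq.~\eqref{EqnQuadraticParallelization}) is a sum of products of two $\hat{d}$-type operators, repeated use of $[\hat{A}\hat{B},\hat{C}]=\hat{A}[\hat{B},\hat{C}]+[\hat{A},\hat{C}]\hat{B}$ together with $[\hat{d}_j,\hat{c}^{(m)}{}^{\dagger}]=[\hat{d}_j^{\dagger},\hat{c}^{(m)}{}^{\dagger}]=0$ makes each term vanish. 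Combining this with the span lemma and bilinearity, both $[\hat{S}_\perp,\hat{\mathfrak{c}}^{(m)}{}^{\dagger}]$ and $[\hat{\mathfrak{c}}^{(m)},\hat{S}_\perp]$ reduce to finite linear combinations of the commutators of $\hat{S}_\perp$ with the $\hat{c}^{(m')}$ and $\hat{c}^{(m')}{}^{\dagger}$ ($m'\leqslant m\leqslant n$), all of which vanish; hence both commutators are zero for every $m\leqslant n$, which is the assertion. If some $\hat{\mathfrak{c}}^{(m)}=0$, as permitted by Eq.~\eqref{Eqnbarmathfrakcn2}, the statement is trivial.

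I expect the main obstacle to be the span lemma: one must confirm that the orthogonalization of Eq.~\eqref{orthogonalize_c} genuinely produces a triangular (Gram--Schmidt-type) relation that never mixes any $\hat{\mathfrak{c}}^{(m)}$ into the complementary $\hat{d}$-subspace, and that the case distinctions swapping operators with their adjoints preserve the invertibility of that relation. Everything else then follows routinely from orthogonality and the Leibniz rule for commutators.
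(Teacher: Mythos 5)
Your proposal is correct and follows essentially the same route as the paper's own proof: establish that $\hat{S}_{\perp}$ commutes with every $\hat{c}^{(m)}$ and $\hat{c}^{(m)}{}^{\dag}$ because it is built exclusively from the orthogonal $\hat{d}$-operators, then use the fact that each $\hat{\mathfrak{c}}^{(m)}$ is a linear combination of the $\hat{c}^{(m')}$ and $\hat{c}^{(m')}{}^{\dag}$ with $m'\leqslant m$ to conclude by bilinearity. The only difference is that you spell out, via induction on the triangular Gram--Schmidt relation, the span statement that the paper simply asserts; this is a welcome tightening rather than a change of method.
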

\begin{proof}
By construction, {each element of the set $\{\hat{c},\hat{c}^{(1)},...,\hat{c}^{(n)}\}$ is orthogonal to any element in the set $\{\hat{d}_1,\hat{d}_2,...\}$. Utilizing this property, one can show that
\begin{equation}
[\hat{S}_{\perp},\hat{c}^{(m)}{}^{\dag}]=[\hat{c}^{(m)},\hat{S}_{\perp}]=0, \forall\, {m\leqslant n} \label{EqnSiiOrthogonal1} \, .
\end{equation}
This implies that
\begin{equation}
[\hat{S}_{\perp},\sum_{m=0}^n A_m \hat{c}^{(m)}+A'_{m}\hat{c}^{(m)}{}^{\dag}]=0 \,.
\end{equation}
Since any element of the set $\{\hat{c},\hat{\mathfrak{c}}^{(1)},...,\hat{\mathfrak{c}}^{(n)}\}$ can be written as a linear combination of the elements in the set $\{\hat{c},\hat{c}^{(1)},...,\hat{c}^{(n)}\}$, by selecting the correct values for $A_m$ and $A'_{m}$, one can assure that
\begin{equation}
[\hat{S}_{\perp},\hat{\mathfrak{c}}^{(m)}{}^{\dag}]=[\hat{\mathfrak{c}}^{(m)},\hat{S}_{\perp}]=0, \forall\, {m\leqslant n} \label{EqnSiiOrthogonal2} \, ,
\end{equation}
thus completing} the proof.
\end{proof}
\begin{proposition}
{$[\hat{S}^{[n]}_\perp,\hat{\mathfrak{c}}^{(m)}{}^{\dag}]=[\hat{\mathfrak{c}}^{(m)},\hat{S}^{[n]}_\perp]=0, \forall\, {m< n}$} is true.\label{PropTrueifCond12true2}
\end{proposition}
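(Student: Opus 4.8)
The plan is to exploit the fact that, by construction, each evolution component $\hat{\mathfrak{c}}^{(m)}$ lives entirely within the subspace spanned by the orthonormalized operators $\{\hat{c}^{(0)},\ldots,\hat{c}^{(m)}\}$, which by design is orthogonal to (and therefore commutes with) every $\hat{d}_i$. First I would record the consequence of the Gram-Schmidt procedure in Eq. \eqref{orthogonalize_c}: the spans satisfy $\mathrm{span}\{\hat{\mathfrak{c}}^{(0)},\ldots,\hat{\mathfrak{c}}^{(m)}\}=\mathrm{span}\{\hat{c}^{(0)},\ldots,\hat{c}^{(m)}\}$, so that each $\hat{\mathfrak{c}}^{(m)}$ with $m\leqslant n$ is a linear combination of $\hat{c}^{(j)}$ and $\hat{c}^{(j)}{}^{\dagger}$ with $j\leqslant m$ only, and hence carries no $\hat{d}$-component.

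The crucial observation is that, for $m<n$, the defining relation \eqref{Eqnbarmathfrakcn} gives $[\hat{\mathfrak{c}}^{(m)},\hat{S}]\propto \bar{\mathfrak{c}}^{(m+1)}$, which equals $\hat{\mathfrak{c}}^{(m+1)}$ up to a Hermitian-conjugation ambiguity and a normalization factor. Since $m+1\leqslant n$, this operator again lies in the $\hat{c}$-subspace and therefore has vanishing $\hat{d}$-component. I would then split the left-hand side using the decomposition $\hat{S}=\hat{S}^{[n]}+\hat{S}^{[n]}_\perp+\hat{S}_\perp$ from Eq. \eqref{EqnSDecomptoS^[n]+}. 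Proposition \ref{PropTrueifCond12true1} annihilates the $\hat{S}_\perp$ contribution for all $m\leqslant n$; the $\hat{S}^{[n]}$ contribution, being quadratic in the $\hat{c}^{(j)}$ alone, commutes into a linear combination of $\hat{c}$-operators only; and the cross term $[\hat{\mathfrak{c}}^{(m)},\hat{S}^{[n]}_\perp]$ — since $\hat{S}^{[n]}_\perp$ couples exactly one $\hat{c}^{(j)}$-mode to one $\hat{d}_i$-mode and $\hat{\mathfrak{c}}^{(m)}$ commutes with all the $\hat{d}_i$ — produces, via the Leibniz rule, only a linear combination of $\hat{d}_i$ and $\hat{d}_i^{\dagger}$.

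I would finish by projecting this identity onto the $\hat{d}$-subspace. The left-hand side contributes nothing there, and of the three terms on the right only $[\hat{\mathfrak{c}}^{(m)},\hat{S}^{[n]}_\perp]$ survives; hence it must vanish, giving $[\hat{\mathfrak{c}}^{(m)},\hat{S}^{[n]}_\perp]=0$ for $m<n$. The conjugate identity $[\hat{S}^{[n]}_\perp,\hat{\mathfrak{c}}^{(m)}{}^{\dagger}]=0$ then follows by taking the Hermitian conjugate together with the anti-Hermiticity $\hat{S}^{[n]}_\perp{}^{\dagger}=-\hat{S}^{[n]}_\perp$, which each parallelization block inherits from $\hat{S}^{\dagger}=-\hat{S}$ (conjugating a block built from a fixed mode pair returns a term on the same pair).

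The main obstacle — and the reason the statement is restricted to $m<n$ rather than $m\leqslant n$ — is precisely the claim that $[\hat{\mathfrak{c}}^{(m)},\hat{S}]$ remains inside the $\hat{c}$-subspace. This holds only while $m+1\leqslant n$, since $\hat{\mathfrak{c}}^{(n+1)}$ is produced by a further commutation that has not been orthogonalized against the chosen basis and may acquire genuine $\hat{d}$-components, breaking the projection argument at $m=n$. I would therefore take care to justify the span identity arising from the Gram-Schmidt step in Eq. \eqref{orthogonalize_c} cleanly, and to verify that the three-case dagger assignment of $\bar{\mathfrak{c}}^{(m+1)}$ in Eq. \eqref{Eqnbarmathfrakcn2} never moves the operator outside the $\hat{c}$-subspace, so that the subspace-membership argument goes through unchanged.
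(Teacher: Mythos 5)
Your proposal is correct and follows essentially the same route as the paper: both arguments rest on the observation that $[\hat{\mathfrak{c}}^{(m)},\hat{S}]$ stays inside the span of the next evolution component (and hence of the $\hat{c}^{(j)}$'s) for $m<n$, use Proposition \ref{PropTrueifCond12true1} to discard $\hat{S}_\perp$, note that $[\hat{\mathfrak{c}}^{(m)},\hat{S}^{[n]}_\perp]$ produces only $\hat{d}_i$/$\hat{d}_i^{\dagger}$ terms while $[\hat{\mathfrak{c}}^{(m)},\hat{S}^{[n]}]$ produces only $\hat{c}$-terms, and then compare $\hat{d}$-components on both sides to force the cross term to vanish. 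Your explicit remarks on the span identity from the orthogonalization step and on recovering the conjugate identity from anti-Hermiticity are details the paper leaves implicit, but they do not change the argument.
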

\begin{proof}
We prove this conjecture by noticing {that
\begin{equation}
\begin{aligned}
&[\hat{\mathfrak{c}}^{(m-1)},\hat{S}]= Q \hat{\mathfrak{c}}^{(m)}+ R \hat{\mathfrak{c}}^{(m)}{}^{\dag},\;[\hat{S},\hat{\mathfrak{c}}^{(m-1)}{}^{\dag}]= -R \hat{\mathfrak{c}}^{(m)}-Q \hat{\mathfrak{c}}^{(m)}{}^{\dag},  \forall\;{m < n} \, ,
\end{aligned}\label{Eqn[afm-1,S]isin}
\end{equation}
where $\{Q,R\}\in \mathbb{R}$}. For brevity, we only consider the first part of the equation. Utilizing the result from Prop. \ref{PropTrueifCond12true1}, Eq. (\ref{Eqn[afm-1,S]isin}) can be simplified to
\begin{equation}
{[\hat{\mathfrak{c}}^{(m-1)},{\hat{S}^{[n]}}]+[\hat{\mathfrak{c}}^{(m-1)},\hat{S}^{[n]}_\perp]= Q \hat{\mathfrak{c}}^{(m)}+ R \hat{\mathfrak{c}}^{(m)}{}^{\dag}\label{Eqn[afm-1,S]isinProp2Simplify} \, . }
\end{equation}
We notice that {$\hat{S}^{[n]}_\perp$ only includes terms of the form $\hat{c}^{(m)}\hat{d}_i$, $\hat{c}^{(m)}{}^{\dag}\hat{d}_i$, $\hat{c}^{(m)}\hat{d}^\dagger_i$ or $\hat{c}^{(m)}{}^{\dag}\hat{d}_i^{\dag}$.  Similarly, $\hat{S}^{[n]}$ only includes terms of the form $\hat{c}^{(m)}\hat{c}^{(m')}$, $\hat{c}^{(m)}{}^{\dag}\hat{c}^{(m')}$ or $\hat{c}^{(m)}{}^{\dag}\hat{c}^{(m')}{}^{\dag}$. This allows us to conclude that
\begin{equation}
\begin{gathered}
{} [\hat{c}^{(m)},\hat{S}^{[n]}_\perp]=\sum_{i=1}^{\infty} Q_{mi} \hat{d}_{i}+ R_{mi}\hat{d}_i^{\dag} \, ,
\\
{} [\hat{c}^{(m)},\hat{S}^{[n]}]=\sum_{m'=0}^{n} Q'_{m m'} \hat{c}^{(m)}+(1+\delta_m^{m'}) R'_{mm'}\hat{c}^{(m')}{}^{\dag} \, .
\label{Eqn[afm-1,S]Separately}
\end{gathered}
\end{equation}
B}y substituting Eq. (\ref{Eqn[afm-1,S]Separately}) into Eq. (\ref{Eqn[afm-1,S]isinProp2Simplify}) and noticing how the right-hand side does not contain {$\hat{d}_i$} terms, we find the following:
\begin{equation}
{[\hat{S}^{[n]}_\perp,\hat{\mathfrak{c}}^{(m)}{}^{\dag}]=[\hat{\mathfrak{c}}^{(m)},\hat{S}^{[n]}_\perp]=0, \forall\, {m< n}  .}
\end{equation}
This completes the proof.
\end{proof}
Combining Prop. (\ref{PropTrueifCond12true}, \ref{PropTrueifCond12true1}, \ref{PropTrueifCond12true2}), we prove that Eq. (\ref{EqnS^[n]}) is a valid $n$th-order unitary action, satisfying Eq. (\ref{EqnConditionforSn}).
\section{Hamiltonian for a nonlinear field interaction}\label{AppNonLinHam}
In this section we derive Eq. (\ref{EqnSchi2Compact1}) from the following action:
\begin{gather}
\hat{S}_{\chi}= -\frac{i}{\hslash} \int_{-\infty}^{\infty}\hspace{-3.5 mm} \mathrm{d}x\mathrm{d}t \,\lambda \alpha_p(t,x)\left(
\hat{E}_{\Omega}(t,x)+ \hat{E}_{\omega}(t,x)\right)^2\text{rect}\left(\frac{x}{L}\right){\, .}
\end{gather}
By the rotating-wave approximation, we simplify the equation to the following {form}:
\begin{gather}
\hat{S}_{\chi}= -\frac{2i}{\hslash} \int_{-\infty}^{\infty} \hspace{-3.5 mm}\mathrm{d}x\mathrm{d}t \,\lambda \alpha_p(t,x)\hat{E}_{\Omega}(t,x)\hat{E}_{\omega}(t,x)\text{rect}\left(\frac{x}{L}\right) { \, .}
\end{gather}
The Fourier decomposition of these operators leads {to}
\begin{gather}
\hat{S}_{\chi}=  \frac{i \alpha\lambda }{2 \pi A  c \epsilon_0}\int\hspace{-0.5 mm}\mathrm{d}\vec{\omega}\, \text{sign}(\omega \Omega)\sqrt{\frac{|\omega\Omega|}{n_{\omega}n_\Omega}} E_p(\omega_p)
\hat{a}_\omega \hat{a}_{\Omega} \int_{-L/2}^{L/2}\hspace{-3.5 mm}{d}x \, e^{-i(\omega_p n_{\omega_p}+\omega n_\omega+\Omega n_{\Omega})\frac{x}{c}}\int_{-\infty}^{\infty} \hspace{-3.5 mm}\mathrm{d}t \, e^{-i(\omega_p+\omega+\Omega)t}
-h.c. {\, ,}
\end{gather}
where $\mathrm{d}\vec{\omega}=\mathrm{d}\omega_p\mathrm{d}\omega\mathrm{d}\Omega$. The {integral over time can be performed to give}
\begin{gather}
\begin{aligned}
\int_{-\infty}^{\infty}\hspace{-3.5 mm} \mathrm{d}t \, e^{-i(\omega_p+\omega+\Omega)t} =2 \pi \delta(\omega_p+\omega+\Omega) {\, .}
\end{aligned}
\end{gather}
By integrating over $\omega_p$, we set $\omega_p=-\omega-\Omega$:
\begin{gather}
\begin{aligned}
\hat{S}_{\chi}=   \frac{i \alpha\lambda}{ A   c \epsilon_0}\int\hspace{-0.5 mm}\mathrm{d}\omega\mathrm{d}\Omega\, &\text{sign}(\omega \Omega)\sqrt{\frac{|\omega\Omega|}{n_{\omega}n_\Omega}} E_p(-\omega-\Omega)
\hat{a}_\omega \hat{a}_{\Omega}  \int_{-L/2}^{L/2}\hspace{-3.5 mm}{d}x \, e^{-i( \omega (n_\omega-n_{\omega+\Omega})+\Omega (n_{\Omega}-n_{\omega+\Omega}))\frac{x}{c}}-h.c.
\end{aligned}
\end{gather}
Where we have used the property that $n_\omega=n_{-\omega}$. {Integration over the space coordinate gives}
\begin{equation}
\int_{-L/2}^{L/2}\hspace{-3.5 mm}{d}x \, e^{-i( \omega (n_\omega-n_{\omega+\Omega})+\Omega (n_{\Omega}-n_{\omega+\Omega}))\frac{x}{c}}=L\, \text{sinc}\left\{\frac{L}{2c}\big[ \omega (n_\omega-n_{\omega+\Omega})+\Omega (n_{\Omega}-n_{\omega+\Omega})\big]\right\} { \, .}
\end{equation}
This is proportional to the phase-matching {funcion} (i.e. $\zeta_{\omega,\Omega}$). Utilizing this result, we obtain
\begin{gather}
\begin{aligned}
\hat{S}_{\chi}=  \frac{i \alpha\lambda  L}{ A  c \epsilon_0}\int\hspace{-0.5 mm}\mathrm{d}\omega\mathrm{d}\Omega \, \text{sign}(\omega \Omega)\sqrt{\frac{|\omega\Omega|}{n_{\omega}n_\Omega}} E_p(-\omega-\Omega)
\hat{a}_\omega \hat{a}_{\Omega}  \text{sinc}\left\{\frac{L}{2c}\big[ \omega (n_\omega-n_{\omega+\Omega})+\Omega (n_{\Omega}-n_{\omega+\Omega})\big]\right\}-h.c. {\, . }
\end{aligned}
\end{gather}
By setting one of the dummy variable as $\omega \rightarrow -\omega$:
\begin{gather}
\begin{aligned}
\hat{S}_{\chi}=  - \frac{i \alpha\lambda  L}{A   c \epsilon_0}\int\hspace{-0.5 mm}\mathrm{d}\omega\mathrm{d}\Omega \,  \text{sign}(\omega \Omega)\sqrt{\frac{|\omega\Omega|}{n_{\omega}n_\Omega}} E_p(\omega-\Omega)
\hat{a}_\omega ^{\dag}\hat{a}_{\Omega}  
\text{sinc}\left\{\frac{L}{2c}\big[ \Omega (n_{\Omega}-n_{\omega-\Omega})-\omega (n_\omega-n_{\omega-\Omega})\big]\right\}-h.c. {\, .}
\end{aligned}
\end{gather}
{S}etting the dummy variable $\omega \rightarrow -\omega$ and $\Omega\rightarrow -\Omega$ for the Hermitian conjugate, we obtain
\begin{gather}
\begin{aligned}
\hat{S}_{\chi}=  - \frac{i \lambda  L}{A  \gamma}\int\hspace{-0.5 mm}\mathrm{d}\omega\mathrm{d}\Omega \, & \text{sign}(\omega \Omega)\sqrt{\frac{|\omega\Omega|}{n_{\omega}n_\Omega}} \left[\alpha E_p(\Omega-\omega)+\alpha^*E^*_p(\omega-\Omega)\right]
\text{sinc}\left\{\frac{L}{2c}\big[ \Omega (n_{\Omega}-n_{\omega-\Omega})-\omega (n_\omega-n_{\omega-\Omega})\big]\right\}\hat{a}_\omega ^{\dag}\hat{a}_{\Omega}  {\, .}
\end{aligned}
\end{gather}
{With the aid of} Eq. (\ref{EqnPhaseMatching1}) and (\ref{EqnPhaseMatching2})  we write this equation in a compact form:
\begin{gather}
\hat{S}_{\chi}=  \int\hspace{-0.5 mm}\mathrm{d}\Omega\,\mathrm{d}\omega\, \alpha_{p}(\omega-\Omega)\zeta_{\omega,\Omega}\hat{a}_{\Omega} \hat{a}_{\omega}^{\dag} {\, .}
\end{gather}
\section{Numerical results on the second-order unitary evolution} \label{AppNumerics}
\subsection{Second-order unitary evolution}
Let us define the {operator $\hat{\mathfrak{u}}_{\widetilde\omega}=\int \hspace{-0.5 mm}\mathrm{d}\omega \,\frac{1}{\sqrt{\Delta\omega}}\Pi(\frac{\widetilde{\omega}-\omega}{\Delta \omega})\hat{a}_{\omega}$.
We are interested in the second-order unitary evolution under the action
\begin{equation}
\begin{aligned}
\hat{S}_\chi&=\int_{-\infty}^{\infty}\hspace{-3.5 mm} \mathrm{d}\omega \mathrm{d}\Omega \, S_\chi(\Omega,\omega)\hat{a}_{\Omega} \hat{a}_{\omega}^{\dag}  \, .
\end{aligned}
\end{equation}
The evolution of $\hat{\mathfrak{u}}_{\widetilde{\omega}}$ according to the Baker-Hausdorff-lemma expansion to second order reads
\begin{equation}
\hat{\mathfrak{u}}_{\widetilde{\omega}}{}' \approx \hat{\mathfrak{u}}_{\widetilde{\omega}}+[\hat{\mathfrak{u}}_{\widetilde{\omega}},\hat{S}]+\frac{1}{2!}[[ \hat{\mathfrak{u}}_{\widetilde{\omega}},\hat{S}],\hat{S}] \, .
\end{equation}
We can then use the approach defined by Eqs. \eqref{Eqnbarmathfrakcn}-\eqref{Eqnbarmathfrakcn2} to write}
\begin{equation}\label{Eqn2ndOrdPer}
{\hat{\mathfrak{u}}_{\widetilde{\omega}}{}' \approx  \hat{\mathfrak{u}}_{\widetilde{\omega}}{}+\theta^{(1)}_{\widetilde{\omega}}\bar{\mathfrak{u}}^{(1)}_{\widetilde{\omega}}+\frac{1}{2}\theta^{(1)}_{\widetilde{\omega}}\theta^{(2)}_{\widetilde{\omega}}\bar{\mathfrak{u}}^{(2)}_{\widetilde{\omega}} \, . }
\end{equation}
We note that, in the paper $\bar{\mathfrak{u}}_{\widetilde\omega}^{(1)}=\bar{\mathfrak{a}}_{\widetilde\omega}$. {One can orthonormalize these operators with the aid of Eq. \eqref{orthogonalize_c}, with $\bar{u}^{(1)}_{\widetilde\omega}$ and $\bar{u}^{(2)}_{\widetilde{\omega}}$ being the so obtained orthonormalized first- and second-order mode operators.}
{Eq. (\ref{EqnS^[n]}), written in terms of these operators, reads (note that energy conservation does not allow for quadratic terms on the same mode operator, therefore ruling out $\left(\hat{S}_{\chi }\right)_{\parallel ii}$ terms):}
\begin{align}
\hat{S}_{\chi}^{[2]}&=\sum_{m=0}^{2}\sum_{m'=m}^{2} \left( \hat{S}_{\chi}\right)_{\parallel mm'}
\\&=\left(\hat{S}_{\chi }\right)_{\parallel 01}+\left(\hat{S}_{\chi}\right)_{\parallel 12}
\\&=\left(\theta^{(1)}_{\widetilde\omega} {\bar{\mathfrak{u}}}^{(1)}_{\widetilde{\omega}} {\hat{{\mathfrak{u}}}}^{{(0)}}_{\widetilde{\omega}}{}^{\dag}-h.c.\right)+[\bar{\mathfrak{\mathfrak{u}}}^{(1)}_{\widetilde{\omega}},\bar{\mathfrak{\mathfrak{u}}}^{(1)}_{\widetilde{\omega}}{}^{\dag}] \left(\theta^{(2)}_{\widetilde\omega} {\bar{\mathfrak{u}}}^{(1)}_{\widetilde{\omega}}{}^{\dag}\tilde{\mathfrak{u}}^{(2)}_{\widetilde{\omega}}-h.c.\right)
\\
&={[\bar{\mathfrak{u}}^{(1)}_{\widetilde{\omega}},\bar{\mathfrak{u}}^{(1)}_{\widetilde{\omega}}{}^{\dag}]\theta^{(2)}_{\widetilde{\omega}} (\bar{\mathfrak{u}}^{(1)}_{\widetilde{\omega}}{}^{\dag}\bar{\mathfrak{u}}^{(2)}_{\widetilde{\omega}}-\bar{\mathfrak{u}}^{(2)}_{\widetilde{\omega}}{}^{\dag}\bar{\mathfrak{u}}_{\widetilde{\omega}}^{{(1)}})\, . }
\end{align}
{The evolution of $\hat{\mathfrak{u}}_{\widetilde\omega}$ under the second-order unitary evolution is therefore given by
\begin{gather}\label{Eqn2ndOrdUni}
\hat{\mathfrak{u}}_{\widetilde\omega}'=\begin{cases}
\hat{\mathfrak{u}}_{\widetilde\omega}+[\hat{\mathfrak{u}}_{\widetilde\omega},\hat{\mathfrak{u}}^{(2)}_{\widetilde\omega}{}^{\dag}]\left((\cos(\theta^{(2)}_{\widetilde\omega})-1)\hat{\mathfrak{u}}^{(2)}_{\widetilde\omega}{}-\sin(\theta^{(2)}_{\widetilde\omega})\hat{\mathfrak{u}}^{(1)}_{\widetilde\omega}\right), & \text{if }[\bar{\mathfrak{u}}^{(1)}_{\widetilde{\omega}},\bar{\mathfrak{u}}^{(1)}_{\widetilde{\omega}}{}^{\dag}]=1, \;[\bar{\mathfrak{u}}^{(2)}_{\widetilde{\omega}},\bar{\mathfrak{u}}^{(2)}_{\widetilde{\omega}}{}^{\dag}]=1 \, ;
\\
\hat{\mathfrak{u}}_{\widetilde\omega}+[\hat{\mathfrak{u}}^{(2)}_{\widetilde\omega},\hat{\mathfrak{u}}_{\widetilde\omega}]\left((\cosh(\theta^{(2)}_{\widetilde\omega})-1)\hat{\mathfrak{u}}^{(2)}_{\widetilde\omega}{}^{\dag}{}+\sinh(\theta^{(2)}_{\widetilde\omega})\hat{\mathfrak{a}}^{(1)}_{\widetilde\omega}\right), & \text{if }[\bar{\mathfrak{u}}^{(1)}_{\widetilde{\omega}},\bar{\mathfrak{u}}^{(1)}_{\widetilde{\omega}}{}^{\dag}]=1, \;[\bar{\mathfrak{u}}^{(2)}_{\widetilde{\omega}},\bar{\mathfrak{u}}^{(2)}_{\widetilde{\omega}}{}^{\dag}]=-1 \, ;
\\
\hat{\mathfrak{u}}_{\widetilde\omega}+[\hat{\mathfrak{u}}_{\widetilde\omega},\hat{\mathfrak{u}}_{\widetilde\omega}^{(2)}{}^{\dag}]\left((\cosh(\theta^{(2)}_{\widetilde\omega})-1)\hat{\mathfrak{u}}^{(2)}_{\widetilde\omega}{}+\sinh(\theta^{(2)}_{\widetilde\omega})\hat{\mathfrak{u}}^{(1)}_{\widetilde\omega}{}^{\dag}\right), & \text{if }[\bar{\mathfrak{u}}^{(1)}_{\widetilde{\omega}},\bar{\mathfrak{u}}^{(1)}_{\widetilde{\omega}}{}^{\dag}]=-1, \;[\bar{\mathfrak{u}}^{(2)}_{\widetilde{\omega}},\bar{\mathfrak{u}}^{(2)}_{\widetilde{\omega}}{}^{\dag}]=1 \, ;
\\
\hat{\mathfrak{u}}_{\widetilde\omega}+[\hat{\mathfrak{u}}^{(2)}_{\widetilde\omega},\hat{\mathfrak{u}}_{\widetilde\omega}]\left((\cos(\theta^{(2)}_{\widetilde\omega})-1)\hat{\mathfrak{u}}^{(2)}_{\widetilde\omega}{}^{\dag}-\sin(\theta^{(2)}_{\widetilde\omega})\hat{\mathfrak{u}}^{(1)}_{\widetilde\omega}{}^{\dag}\right), & \text{if }[\bar{\mathfrak{u}}^{(1)}_{\widetilde{\omega}},\bar{\mathfrak{u}}^{(1)}_{\widetilde{\omega}}{}^{\dag}]=-1, \;[\bar{\mathfrak{u}}^{(2)}_{\widetilde{\omega}},\bar{\mathfrak{u}}^{(2)}_{\widetilde{\omega}}{}^{\dag}]=-1 \, .
\end{cases}
\end{gather}
By introducing the expressions
\begin{gather}
M^{(0)}_{\widetilde{\omega}}=[\hat{\mathfrak{u}}_{\widetilde\omega},\hat{\mathfrak{u}}_{\widetilde\omega}^{(2)}{}^{\dag}]+[\hat{\mathfrak{u}}^{(2)}_{\widetilde\omega},\hat{\mathfrak{u}}_{\widetilde\omega}] \, ,
\\
M^{({2})}_{\widetilde{\omega}}= \frac{|[\bar{\mathfrak{u}}^{(1)}_{\widetilde{\omega}}+\bar{\mathfrak{u}}^{(2)}_{\widetilde{\omega}},\bar{\mathfrak{u}}^{(1)}_{\widetilde{\omega}}{}^{\dag}+\bar{\mathfrak{u}}^{(2)}_{\widetilde{\omega}}{}^{\dag}]|}{2}\cos(\theta^{(2)}_{\widetilde\omega})+\frac{|[\bar{\mathfrak{u}}^{(1)}_{\widetilde{\omega}}{+}\bar{\mathfrak{u}}^{(2)}_{\widetilde{\omega}},\bar{\mathfrak{u}}^{(1)}_{\widetilde{\omega}}{}^{\dag}-\bar{\mathfrak{u}}^{(2)}_{\widetilde{\omega}}{}^{\dag}]|}{2}\cosh(\theta^{(2)}_{\widetilde\omega}) \, ,
\\
M^{({1})}_{\widetilde{\omega}}= -\frac{|[\bar{\mathfrak{u}}^{(1)}_{\widetilde{\omega}}+\bar{\mathfrak{u}}^{(2)}_{\widetilde{\omega}},\bar{\mathfrak{u}}^{(1)}_{\widetilde{\omega}}{}^{\dag}+\bar{\mathfrak{u}}^{(2)}_{\widetilde{\omega}}{}^{\dag}]|}{2}\sin(\theta^{(2)}_{\widetilde\omega})+\frac{|[\bar{\mathfrak{u}}^{(1)}_{\widetilde{\omega}}{+}\bar{\mathfrak{u}}^{(2)}_{\widetilde{\omega}},\bar{\mathfrak{u}}^{(1)}_{\widetilde{\omega}}{}^{\dag}-\bar{\mathfrak{u}}^{(2)}_{\widetilde{\omega}}{}^{\dag}]|}{2}\sinh(\theta^{(2)}_{\widetilde\omega}) \, ,
\end{gather}
we simplify Eq. (\ref{Eqn2ndOrdUni}) to
\begin{equation}
\hat{\mathfrak{u}}'_{\widetilde{\omega}}=\hat{\mathfrak{u}}_{\widetilde{\omega}}+M^{(0)}_{\widetilde{\omega}}\left((M^{(2)}_{\widetilde{\omega}}-1)\bar{\mathfrak{u}}^{(2)}_{\widetilde{\omega}}{}+M^{({1})}_{\widetilde{\omega}}\bar{\mathfrak{u}}^{(1)}_{\widetilde{\omega}}\right) \, .\label{Eqn2ndOrdUniEvolSimple}
\end{equation}
By utilizing the properties of the commutator, namely

\begin{equation}
\begin{aligned}
[\bar{\mathfrak{u}}^{(2)}_{\widetilde{\omega}},\bar{\mathfrak{u}}^{(2)}_{\widetilde{\omega}}{}^{\dag}]=1 \Rightarrow {}\begin{cases}
[\hat{\mathfrak{u}}_{\widetilde\omega},\bar{\mathfrak{u}}^{(2)}_{\widetilde\omega}{}^{\dag}]=-[\bar{\mathfrak{u}}^{(1)}_{\widetilde{\omega}},\bar{\mathfrak{u}}^{(1)}_{\widetilde{\omega}}{}^{\dag}]\theta^{(1)}_{{\widetilde\omega}}/\theta^{(2)}_{\widetilde\omega} \, ,
\\
[\hat{\mathfrak{u}}^{(2)}_{\widetilde\omega},\hat{\mathfrak{u}}_{\widetilde\omega}]=0 \, ,
\end{cases}
\\
[\bar{\mathfrak{u}}^{(2)}_{\widetilde{\omega}},\hat{\mathfrak{u}}^{(2)}_{\widetilde{\omega}}{}^{\dag}]=-1 \Rightarrow {}
\begin{cases}
[\hat{\mathfrak{u}}_{\widetilde\omega},\hat{\mathfrak{u}}^{(2)}_{\widetilde\omega}{}^{\dag}]=0 \, ,
\\
[\hat{\mathfrak{u}}^{(2)}_{\widetilde\omega},\hat{\mathfrak{u}}_{\widetilde\omega}]=[\bar{\mathfrak{u}}^{(1)}_{\widetilde{\omega}},\bar{\mathfrak{u}}^{(1)}_{\widetilde{\omega}}{}^{\dag}]\theta^{(1)}_{\widetilde\omega}/\theta^{(2)}_{\widetilde\omega} \, ,
\end{cases}
\end{aligned}
\end{equation}
we} show that all four cases of Eq. (\ref{Eqn2ndOrdUni}), coincide with Eq. (\ref{Eqn2ndOrdPer}) up to second order in {$\theta_{\widetilde\omega}^{(2)}$}. This means that $\hat{S}^{[2]}_{\chi}$ is a proper second order normalized action, giving results accurate to at least second order in perturbation theory.
 \begin{figure}[t!]
\begin{center}
\includegraphics[width=0.7\textwidth]{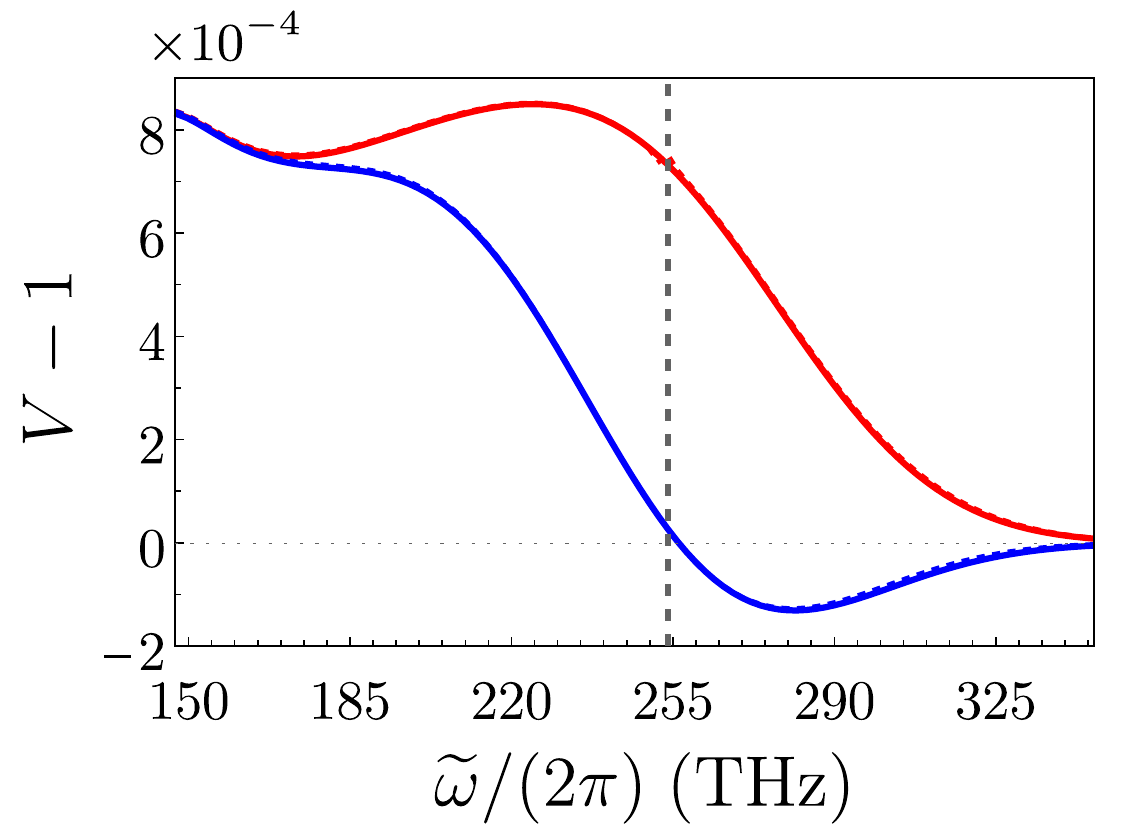}
\caption{Numerical plot of the X and P quadrature variance for both first and second-order unitary evolution theory. The top line (the red line) is the P-quadrature variance, while the bottom line (the blue line) is the X-quadrature variance. The dotted line corresponds to the second order results.}
\label{FigPlot2ndOrd}
\end{center}
\end{figure}

\begin{figure}[t]
\begin{center}
\includegraphics[width=0.7\textwidth]{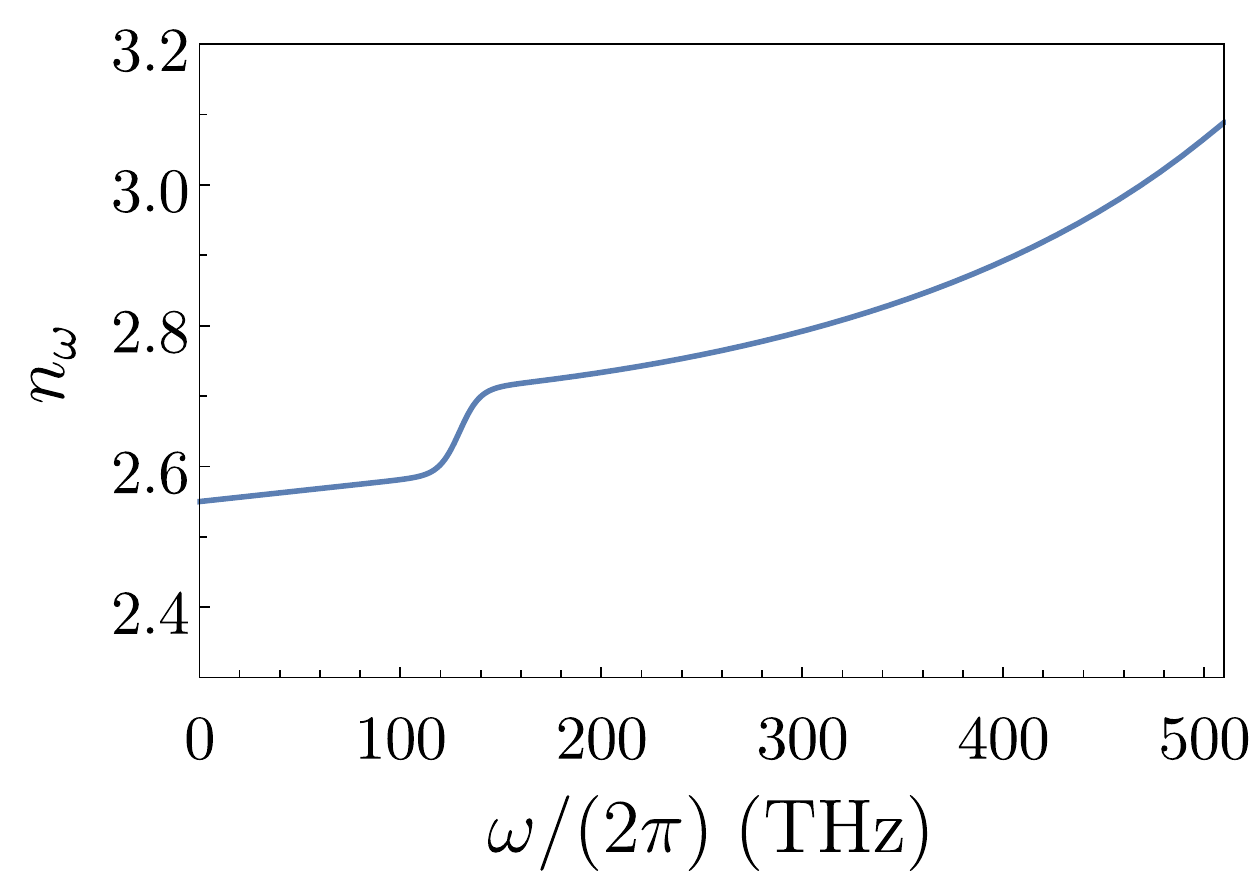}
\caption{The frequency dependent refractive index of the EOX crystal is plotted. We have utilized the (simplified) model of \cite{Andrey} for the MIR regime. For the NIR regime, we have utilized the fit provided by \cite{Marple1964}. We have continuously connected the two models.}
\label{FigRefractiveIndex}
\end{center}
\end{figure}

By utilizing Eq. (\ref{Eqn2ndOrdUniEvolSimple}) we compute the quadrature variance. A numerical plot of the quadrature variance is shown in Fig. \ref{FigPlot2ndOrd}, with the refractive index shown in Fig. ~\ref{FigRefractiveIndex}.
It is found that the difference between the first and second-order unitary evolution method is negligible. There is a minor deviation between the two prediction near $\omega_p$. This is the point where {$\theta_{\widetilde\omega}^{(2)}$} becomes large, and the predictions made by second-order unitary evolution method starts to fail. Higher order calculations are required to have a more accurate model in this regime.
\section{Homodyne detection with electro-optic sampling}\label{AppBalancedHomodyne}
In this section, we consider the mathematics of the ellipsometry scheme. The final state of the electric field{, $\ket{\psi}$, after going through the various crystals can be written in the Schrödinger picture as
$\ket{\psi}=\hat{U}_{\phi_z}\hat{U}_{\chi}\ket{0}$, with
\begin{equation}
\hat{U}_{\phi_z}= \exp\Big(i \phi \int_0^{\infty}\hspace{-3.5 mm}\mathrm{d}\omega \,  \hat{a}_{\omega,z}^{\dag}\hat{a}_{\omega,z}\Big) \, .
\end{equation}
being the evolution operator corresponding to the action of a $\phi$-wave plate. In the Heisenberg picture, on the other hand, evolution is considered on the operators themselves, so that $\hat{a}_{\omega,\nu}$ (with polarization states $\nu=s,z$) is evolved to the form
\begin{equation}
\hat{a}''_{\omega, \nu}=(\hat{U}_{\phi_z}\hat{U}_{\chi})^{\dag}\hat{a}_{\omega,S}\hat{U}_{\phi_z}\hat{U}_{\chi} \, ,
\end{equation}
with
$\hat{U}_{\phi_z}^{\dag}\hat{a}_{\omega,z}\hat{U}_{\phi_z}=e^{i\phi}\hat{a}_{\omega,z}$ and $\hat{U}_{\phi_z}^{\dag}\hat{a}_{\omega,s}\hat{U}_{\phi_z}=\hat{a}_{\omega,s}$. Adopting the notation $\hat{a}_{\omega,S}'= \hat{U}_{\chi}^{\dag}\hat{a}_{\omega,S}\hat{U}_{\chi}$ and considering the alternative pair of annihilation operators $\hat{a}_{\omega,a}=\frac{1}{\sqrt{2}}(\hat{a}_{\omega,s}+\hat{a}_{\omega,z})$ and $\hat{a}_{\omega,b}=\frac{1}{\sqrt{2}}(\hat{a}_{\omega,z}-\hat{a}_{\omega,s})$ for the polarization axes $a$ and $b$ rotated by $\pi/4$ relative to $s$ and $z$, one finds:
\begin{gather}
\hat{a}_{\omega,a}''=\frac{1}{\sqrt{2}}(\hat{a}_{\omega,z}'e^{i\phi}+\hat{a}_{\omega,\nu}') \, ,
\\
\hat{a}_{\omega,b}''=\frac{1}{\sqrt{2}}(\hat{a}_{\omega,z}'e^{i\phi}-\hat{a}_{\omega,\nu}')\, .
\end{gather}

At the output, we consider the measurement of the filtered photon-number operators defined as:
\begin{equation}
\hat{N}_{\widetilde{\omega},\nu}= \int^{\tilde{\omega}+\Delta\omega /2}_{\tilde{\omega}-\Delta\omega /2}\hspace{-3.5 mm}\mathrm{d}\omega \, \hat{a}_{\omega,\nu}^{\dag}\hat{a}_{\omega,\nu} \, .
\end{equation}
The integration domain corresponds to a photodetection limited to the frequency band of width $\Delta \omega$ centered at $\widetilde\omega$. Such a measurement can be achieved with the inclusion of a band-pass filter before detection. The Wollaston prism isolates the particle in the $a$ and $b$ polarisation, allowing the detection of $\hat{N}_{\widetilde{\omega},a}''$ and $\hat{N}_{\widetilde{\omega},b}''$, independently. The detected (filtered) photon-number operators have the forms
\begin{gather}
\begin{aligned}
\hat{N}''_{\widetilde{\omega},a}&= \int^{\tilde{\omega}+\Delta\omega /2}_{\tilde{\omega}-\Delta\omega /2}\hspace{-3.5 mm}\mathrm{d}\omega\,\frac{1}{2}(\hat{a}_{\omega,z}'e^{i\phi}+\hat{a}_{\omega,s}')^{\dag}(\hat{a}_{\omega,z}'e^{i\phi}+\hat{a}_{\omega,s}')
\\&=\frac{1}{2}(\hat{N}_{\widetilde{\omega},s}'+\hat{N}_{\widetilde{\omega},z}')+\frac{1}{2}\int^{\tilde{\omega}+\Delta\omega /2}_{\tilde{\omega}-\Delta\omega /2}\hspace{-3.5 mm}\mathrm{d}\omega\,
(\hat{a}'^{\dag}_{\omega,z}{}\hat{a}_{\omega,s}'e^{-i\phi}+\hat{a}'^{\dag}_{\omega,s}\hat{a}_{\omega,z}'e^{i\phi}) \, ,
\end{aligned} \label{EqnEvolutionofNzetas''}
\\
\begin{aligned}
\hat{N}''_{\widetilde{\omega},b}&=\frac{1}{2}(\hat{N}_{\widetilde{\omega},s}'+\hat{N}_{\widetilde{\omega},z}')-\frac{1}{2}\int^{\tilde{\omega}+\Delta\omega /2}_{\tilde{\omega}-\Delta\omega /2}\hspace{-3.5 mm}\mathrm{d}\omega\,
(\hat{a}'^{\dag}_{\omega,z}\hat{a}_{\omega,s}'e^{-i\phi}+\hat{a}'^{\dag}_{\omega,s}\hat{a}_{\omega,z}'e^{i\phi}) \,. \label{EqnEvolutionofNzetaz''}
\end{aligned}
\end{gather}
Their sum, $\hat{N}''_{\widetilde{\omega},a}+\hat{N}''_{\widetilde{\omega},b}=(\hat{N}'_{\widetilde{\omega},s}+\hat{N}'_{\widetilde{\omega},z})$, gives the total detected photon number, while their difference gives the electro-optic signal
\begin{equation}
\hat{N}''_{\widetilde{\omega},a}-\hat{N}''_{\widetilde{\omega},b}=\int^{\tilde{\omega}+\Delta\omega /2}_{\tilde{\omega}-\Delta\omega /2}\hspace{-3.5 mm}\mathrm{d}\omega\,
(\hat{a}'^{\dag}_{\omega,z}\hat{a}_{\omega,s}'e^{-i\phi}+\hat{a}'^{\dag}_{\omega,s}\hat{a}_{\omega,z}'e^{i\phi}) \, .
\end{equation}
By utilizing the mean field approximation for the $z$-component of the electric field, these equations reduce to
\begin{gather}
\begin{aligned}
\hat{N}''_{\widetilde{\omega},a}+\hat{N}''_{\widetilde{\omega},b}&\approx\int^{\tilde{\omega}+\Delta\omega /2}_{\tilde{\omega}-\Delta\omega /2}\hspace{-3.5 mm}\mathrm{d}\omega\, |\alpha_z(\omega)|^2 \, ,
\end{aligned}
\\
\begin{aligned}
\hat{N}''_{\widetilde{\omega},a}-\hat{N}''_{\widetilde{\omega},b}&=\int^{\tilde{\omega}+\Delta\omega /2}_{\tilde{\omega}-\Delta\omega /2}\hspace{-3.5 mm}\mathrm{d}\omega\,
( \alpha^*_z(\omega) \hat{a}_{\omega,s}'e^{-i\phi}+\hat{a}'^{\dag}_{\omega,s} \alpha_z(\omega)e^{i\phi}) \, ,
\end{aligned}
\end{gather}
where we have defined $\alpha_z(\omega)|\alpha_z(\omega)\rangle =\hat{a}_{\omega,z}|\alpha_z(\omega)\rangle $. By rearranging the coordinates, we can set $t_0=0$, giving $\alpha_z(\omega)\approx \alpha_z(\widetilde{\omega})$ for a sufficiently small $\Delta \omega$. Considering Eq. \eqref{filter}, this approximation leads to
\begin{gather}
\begin{aligned}
\hat{N}''_{\widetilde{\omega},a}+\hat{N}''_{\widetilde{\omega},b}&\approx|\sqrt{\Delta\omega} \alpha_z(\widetilde{\omega})|^2 \, ,
\end{aligned}
\\
\begin{aligned}
\hat{N}''_{\widetilde{\omega},a}-\hat{N}''_{\widetilde{\omega},b}&\approx\sqrt{\Delta \omega}
\big(\hat{\mathfrak{u}}_{\widetilde{\omega},s}'\alpha^*_z(\widetilde{\omega})e^{-i\phi}+\hat{\mathfrak{u}}'^{\dag}_{\widetilde{\omega},s}\alpha_z(\widetilde{\omega})e^{i\phi}\big) \, .
\end{aligned}
\end{gather}
{By noting that the Minkowski vacuum state is unaffected by phase rotation, we can arbitrarily set the phase of the strong coherent signal, thus by setting ${\alpha}_z(\widetilde\omega)=|{\alpha}_z(\widetilde\omega)|$, we have:
\begin{align}
 \frac{\braket{\hat{N}''_{\widetilde{\omega},a}-\hat{N}''_{\widetilde{\omega},b}}}{\sqrt{\braket{\hat{N}''_{\widetilde{\omega},a}+\hat{N}''_{\widetilde{\omega},b}}}}&=\left\langle\frac{\alpha^*_{z}(\widetilde{\omega})}{|\alpha_{z}(\widetilde{\omega})|}\hat{\mathfrak{u}}'_{\widetilde{\omega}}e^{-i \phi}+\frac{\alpha_{z}(\widetilde{\omega})}{|\alpha_{z}(\widetilde{\omega})|}\hat{\mathfrak{u}}'^{\dag}_{\widetilde{\omega}}e^{i \phi} \right\rangle\, 
\\
&=\left\langle\hat{\mathfrak{u}}'_{\widetilde{\omega}}e^{-i \phi}+\hat{\mathfrak{u}}'^{\dag}_{\widetilde{\omega}}e^{i \phi} \right\rangle\, .
\end{align}
This gives an identical result to \eqref{EOquadrature}.}
\begin{figure}[t]\includegraphics[width=0.7\textwidth]{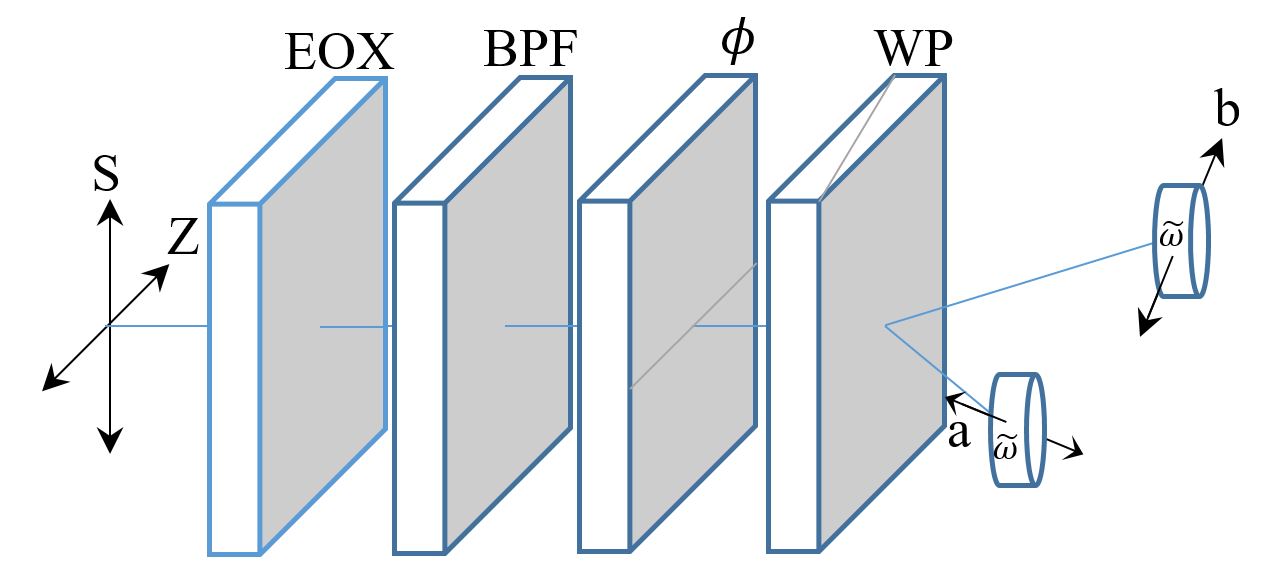}
\caption{The electric field passes through the $EOX$ for the signal to induce an interaction with the vacuum. The output goes through a band-pass filter, which filters all frequency except $\widetilde\omega+\eta/2\leq \omega \leq \widetilde\omega-\eta/2$). The $\phi$-waveplate in the z-polarization allows a homodyne detection of arbitrary phase, this is followed by a wollaston prism in the diagonal plane. The electric field is then detected with a photon counter for each polarisation.}
\label{FigExperimentalSetupHomodyne}
\end{figure}

\twocolumngrid
\bibliography{literature}

\end{document}